\newtheorem{mydef}{Definition}
\begin{document}
\title{On Rich Clubs of Path-Based Centralities in Networks}

\author{Soumya Sarkar}

 \affiliation{%
   \institution{Indian Institute of Technology}
   \city{Kharagpur} 
 }
 \email{soumya015@iitkgp.ac.in}

 \author{Sanjukta Bhowmick}
 \affiliation{%
   \institution{University of Nebraska}
   \city{Omaha} 
   }
 \email{sbhowmick@unomaha.edu}
 \author{Animesh Mukherjee}
 \affiliation{
   \institution{Indian Institute of Technology}
   \city{Kharagpur} 
   }
 \email{animeshm@cse.iitkgp.ernet.in}

\renewcommand{\shortauthors}{}

\begin{abstract}
Many scale-free networks exhibit a ``rich club'' structure, where high degree vertices form tightly interconnected subgraphs. In this paper, we explore the emergence of ``rich clubs'' in the context of shortest path based centrality metrics. We term these subgraphs of connected high closeness or high betweeness vertices as \textit{rich centrality clubs} (RCC).

Our experiments on real world and synthetic networks highlight the inter-relations between RCCs, expander graphs, and the core-periphery structure of the network. We show empirically and theoretically that  RCCs exist, if the core-periphery structure of the network is such that each shell is an expander graph, and their density decreases from inner to outer shells. 

We further demonstrate that in addition to being an interesting topological feature, the presence of RCCs is useful in several applications. The vertices in the subgraph forming the RCC are effective seed nodes for spreading information. Moreover, networks with RCCs are robust under perturbations to their structure.

Given these useful properties of RCCs, we present a network modification model that can efficiently create a RCC within networks where they are not present, while retaining other structural properties of the original network.  

The main contributions of our paper are:  {\bf (i)} we demonstrate that the formation of RCC is related to the core-periphery structure and particularly the expander like properties of each shell, {\bf(ii)} we show that the RCC property can be used to find effective seed nodes for spreading information and for improving the resilience of the network under perturbation  and, finally, {\bf(iii)} we present a modification algorithm that can insert RCC within networks, while not affecting their other structural properties. Taken together, these contributions present one of the first comprehensive studies of the properties and applications of rich clubs for path based centralities.

\end{abstract}

%
%



\maketitle

\section{Introduction}
In many social networks, the high degree nodes form a densely connected subgraph. This is known as the ``rich club'' phenomena. In this paper, we extend the definition of rich clubs, from high degree vertices, to shortest path based centralities, particularly high betweeness and high closeness centrality vertices. We term these extended rich clubs as {\em rich centrality clubs} (RCC). We present the global topological properties that lead to the formation of RCCs in complex networks (sections~\ref{sec:part2} and~\ref{sec:theory}). We further show how RCCs can be leveraged for spreading information efficiently and increasing network resilience (section~\ref{sec:application}). Finally we present a network modification algorithm to create RCCs in a network without disturbing other structural properties of the original network(section~\ref{sec:part3}).   

Our study is motivated by the fact that over the last few years several papers\cite{shin2016corescope,Shin2017,li2015correlation,kitsak2010identification} have independently reported that vertices in the inner shells of the networks can be leveraged to identify high centrality nodes or serve as seeds for community detection. However, each paper focused on only one type of analysis and there was rarely any overlap between the networks studied in these papers. When we conducted an integrated study over a large set of real-world and synthetic networks, we observed that the reported properties of the  vertices in the inner shells hold only for a  certain type of networks. This observation impelled us to investigate the topological property of networks where the inner shells  contain high centrality nodes.

We observed that the inner shells of networks are typically dense, thus if they contain high centrality nodes, then by virtue of being dense, these cores would form a rich centrality club. However, unlike degree which is a local variable, closeness 
and betweeness centralities are based on shortest paths which are global variables. Building on this observation we demonstrate that the networks with RCC also maintain a global pattern. Specifically, each shell is an expander graph, and going from the inner to the outer shell,the shells have gradually decreasing density. In other words, visually and quantitatively, networks with RCC  expand out from a dense inner core to sparse outer shells (see Figure~\ref{motivationfig_1}).



The presence of rich centrality clubs confers several favorable properties to the networks. In particular, due to the presence of many high path based centrality vertices within a small subgraph, the vertices in the RCC can be effective seed nodes in quickly spreading information across the network. Moreover, similar to the traditional rich club, the presence of RCC increases the resilience of the networks under edge perturbations.

Given these favorable properties, we posit that, in many cases, the presence of RCC is desirable. To this end, we propose  
 a modification model that can form a RCC in a network where it is absent. Our model is such that other properties of the original network including the power law exponent, the average degree, shortest path based centralities and clustering co-efficient remain unchanged. 

The  {\bf key contributions} of our paper are as follows.
\begin{compactitem}
\item [(i)] We study the formation of {\em rich  clubs of shortest path based centralities} in complex networks and observe that their presence can lead to faster identification of high centrality nodes and communities. We  demonstrate theoretically  and empirically  that {\em in networks containing RCC, the  shells are expander-like and the density of the shells decreases from the inner to the outer shells} (sections~\ref{sec:part2} and~\ref{sec:theory}).


\item [(ii)] We empirically show that networks containing RCC have several favorable properties (section~\ref{sec:application}). Specifically, {\em the vertices within the RCC are effective seed nodes} for information spreading and {\em networks  containing RCC are resilient} to perturbations to their edges. 

\item [(iii)] We propose a {\em modification model that can insert RCC} into a network, while maintaining other structural properties of the original network (section~\ref{sec:part3}).  Our model is reversible in that when operations are applied in reverse (deletion instead of addition of edges), the RCC can be removed from a network, while also maintaining the other structural properties. Our model only requires the information of the degree of the vertices, which is a much faster operation than computing the betweenness and closeness centralities.

\end{compactitem}

 \begin{figure*}
  
  \begin{minipage}[b]{\textwidth}
  	\centering
    \includegraphics[width=0.35\linewidth, height=0.18\textheight]{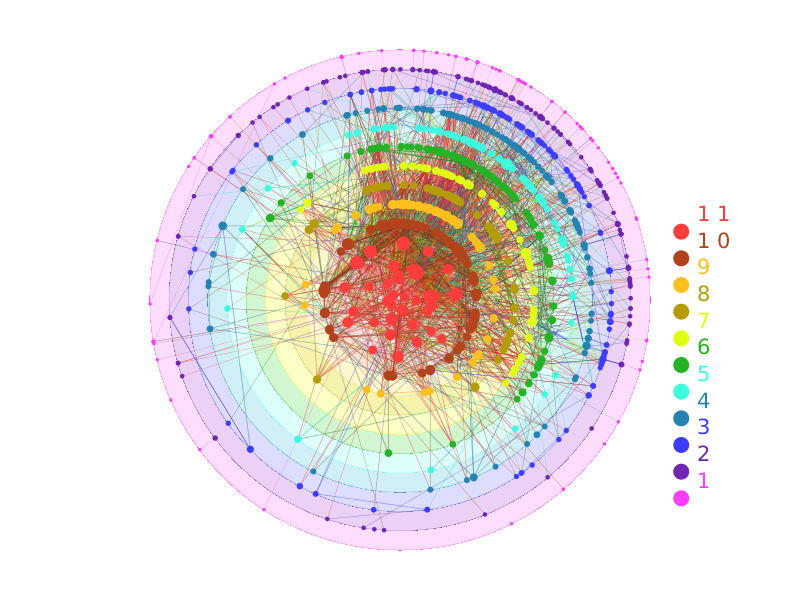}
     \includegraphics[width=0.35\linewidth, height=0.18\textheight]{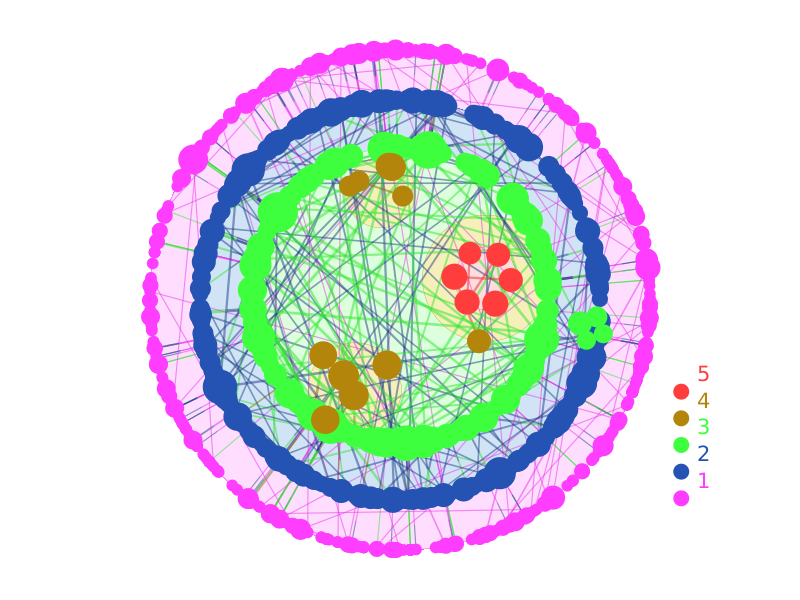}
   \end{minipage}
   \caption{\label{motivationfig_1}Comparing the core periphery of networks generated by~\cite{alvarez2006lanet}. Left: Network of software dependencies demonstrates presence of RCC; shells are arranged as concentric cycles. Right: Network of protein interactions in yeast does not demonstrate the presence of RCC; the innermost cores are not at the center of the network.Color online}
\end{figure*}

\section{Definitions and Datasets}
\label{sec:prelims}
We briefly describe the definitions of the network properties and the test suite used for the experiments.

\subsection{Definitions of network properties}

\begin{mydef}
{\bf $k$-core}: Given a graph $G(V,E)$, where $V,E$ are the set of vertices and edges, a k-core is a maximal subgraph $G_{k-core}$ such that each node in $G_{k-core}$
has degree at least k.
\end{mydef}
\begin{mydef}
{\bf $k$-shell}: Given a graph $G(V,E)$, a $k$-shell is the
induced subgraph over the maximal set of nodes such that (1)
the $k$-shell does not include nodes from any existing higher
shells, and (2) each node in the
$k$-shell has at least $k$ connections to nodes in the $k$-shell or
the higher shells. 
\end{mydef}

The core number of a node is the highest value $k$ such
that the node is a part of a $k$-core. The $k$-core decomposition~\cite{govindan2017k} is the assignment of core numbers to nodes. Core numbers can be computed with complexity of $O(|E|)$.

To find a $k$-core, the $k$-core decomposition algorithm recursively removes nodes with degree less than $k$. We will denote all the nodes belonging to the $k$-core by the set $C_k$. Note that if $G$ is connected then $C_1$ is equivalent to the $V$. Subsequent inner cores are part of the outer cores i.e $ C_k \subset .....C_3 \subset C_2 \subset C_1 \subset V $

Note that the $k$-core is the induced subgraph of the union
of $j$-shells for $j \geq k$. Hence the shells of the graph partitions the set $V$ into disjoint sets of vertices. 



\begin{mydef}
{\bf Expansion property}~\cite{malliaros2011expansion}: Given a graph $G = (V, E)$ where $V$ is the set of nodes and $E$ is the set of edges, the expansion of a set of nodes $S \subset V$ is a function of the number of nodes in $V-S$ to which $S$ is connected. That is, if $N(S)$ is the set of nodes to which $S$ is connected, then the expansion of $S$ is $\frac{|N(S)|}{|S|}$ 
\end{mydef} 
A graph with a high value of expansion property is known as an {\em expander graph}. In an expander graph any subset $S \in V $ (where $ |S| \le \frac{V}{2} $) will have many neighbors. The expansion of a graph can be measured using the {\em Cheeger constant}, $h(G)$. A low Cheeger constant indicates a "bottle neck", i.e. the graph can be partitioned by removing very few edges. A high Cheeger constant indicates that no matter how the graph is partitioned, the number of edges across the partitions is always large.

Accurate calculation of the Cheeger constant is NP hard. For d-regular graphs it can be approximated by the second smallest eigenvalue of the spectrum of the normalised graph Laplacian given by $L = I - D^{-\frac{1}{2}}AD^{-\frac{1}{2}}$. The second smallest eigenvalue $(\lambda_{2})$, also known as the eigengap, is related to Cheeger constant by $\frac{\lambda_{2}}{2} \leq h(G) \leq \sqrt[]{2\lambda_2}$  which is also know as Cheeger's inequality. It was shown in~\cite{Chung} that the lower bound also holds for general graphs.

\subsection{Test suite of networks}
We used a diverse set of networks for our experiments. We used 8 real world networks that are publicly available~\cite{snapnets,kunegis2013konect}.  We generated 15 synthetic networks of varying sizes and varying network core structures using the MUSKETEER synthetic network generation tool ~\cite{gutfraind2015multiscale}. A summary of the properties of these 23 networks is given in Table~\ref{dataset}. The $\alpha$ is the scale free exponent obtained after fitting the empirical degree distribution to a power law distribution which is given by $p(k) \sim k^{-\alpha}$  where $p(k)$ is the fraction of nodes having degree $k$. All the networks are considered to be undirected.

\begin{table}\footnotesize
\begin{tabular}{ |c|c|c|c|c|c|c|c| } 
 \hline
 Network & Nodes & Edges & $\alpha$ & $\mu(d_v)$ & $\mu(C_lC)$ & $\mu(BC)$ & \textit{LCN}\\ 
 \hline
 AS~\cite{snapnets} & 6474 & 13895 & 1.235 & 4.29 & 0.27 & 0.004 & 12\\ 
 \hline
 Caida~\cite{snapnets} & 16493 & 33372 & 1.17 & 4.04 & 0.27 & 0.001 & 20\\ 
 \hline
 Bible~\cite{kunegis2013konect} & 1707 & 9059 & 1.523 & 10.61 & 0.31 & 0.001 & 15\\ 
 \hline
 Software~\cite{kunegis2013konect} & 994 & 4645 & 1.168 & 9.32 & 0.34 & 0.002 & 11 \\ 
 \hline
 Protein~\cite{kunegis2013konect} & 1458 & 1993 & 2.106 & 2.73 & 0.15 & 0.004  & 5 \\ 
 \hline
 Facebook~\cite{snapnets} & 7178 & 10298 & 2.896 & 2.86 & 0.11 & 0.001 & 5 \\ 
 \hline
 Hepth~\cite{snapnets} & 2694 & 4255 & 1.487 & 3.15 & 0.18 & 0.001 & 7\\ 
 \hline
 Power~\cite{kunegis2013konect} & 4941 & 6594 & 2.845 & 2.66  &0.05 & 0.003 & 5 \\ 
 \hline
 N1 & 14212 & 34901 & 1.215 & 16.3 & 0.25  & 0.0007 & 16 \\ 
 \hline
 N2 & 10162 & 25154 & 1.28 & 4.95  &0.27 & 0.0008 & 14 \\ 
 \hline
 N3 & 36469 & 96990 & 1.237 & 2.66  & 0.24 & 0.003 & 27 \\ 
 \hline 
 N4 & 65630 & 170061 & 2.29 & 2.66  &0.13 & 0.003 & 12 \\ 
 \hline 
 N5 & 4091 & 33352 & 1.438 & 2.66  &0.33 & 0.003 & 21 \\ 
 \hline 
 N6 & 6785 & 44381 & 1.421 & 2.66  &0.31 & 0.003 & 19 \\ 
 \hline 
 N7 & 6009 & 13585 & 2.016 & 2.66  &0.18 & 0.003 & 6 \\ 
 \hline 
 N8 & 3863 & 22356 & 1.268 & 2.66  &0.34 & 0.003 & 23 \\ 
 \hline 
 N9 & 11278 & 19616 & 2.737 & 3.47  &0.03 & 0.003 & 5 \\ 
 \hline 
 N10 & 19623 & 33711 & 2.832 & 3.43  &0.05 & 0.00049 & 5 \\ 
 \hline 
 N11 & 5980 & 9501 & 3.027 & 3.17  &0.05 & 0.05 & 6 \\ 
 \hline 
 N12 & 6045 & 13592 & 2.373 & 4.49  &0.11 & 0.0035 & 7 \\ 
 \hline 
 N13 & 7783 & 35185 & 2.517 & 3.61  &0.06 & 0.0033 & 6 \\ 
 \hline 
 N14 & 15988 & 28373 & 3.029 & 3.54  &0.09 & 0.004 & 5 \\
 \hline 
 N15 & 28651 & 51159 & 3.073 & 3.57  & 0.04 & 0.0028 & 6 \\ 
 \hline
\end{tabular}
\caption {\label{dataset} Test suite of networks and their properties. $\alpha$: power-law exponent, $\mu(d_v)$: average degree, $\mu(C_lC)$: average clustering co-efficient, $\mu(BC)$: average betweenness centrality. (LCN):  largest core number in the network.}
\end{table}


\section{Motivating Experiments}\label{sec:part1}
We present the experiments that motivated our research. We test whether  vertices with high core numbers {\bf(i)} have high centralities and {\bf(ii)} can be used as seed nodes for community detection.


\subsection{Correlation with other centrality metrics}\label{sec:correl}

Several papers~\cite{wildie12,holme05,silva08,lin14,Meyer2015}, claim that the vertices with high core numbers should also have high centrality values. To test this claim, we compute the Jaccard coefficient ($J_c$) given by $\frac{S_1 \cap S_2}{ S_1 \cup S_2 }$ between the set of vertices with highest core numbers ($S_1$) and an 
equal number of high ranked nodes for each of the centrality metrics ($S_2$).



The results in Table~\ref{tab:jacs} show a clear separation of the networks. In the first group, all the networks (\textcolor{blue}{blue}) have high $J_c$ implying significant number of high central nodes also have highest core numbers. In the second group, (\textcolor{brown}{brown}) the high core numbered nodes do not have high centrality as per the low $J_c$ scores. 



\begin{table}
\begin{tabular}{ |c|c|c|c| } 
 \hline
 Network & degree & closeness & betweenness \\ 
 \hline
 \textcolor{blue}{AS} & 0.6 & 0.75 & 0.5 \\ 
 \hline
 \textcolor{blue}{Caida} & 0.56 & 0.69 & 0.49 \\ 
 \hline
 \textcolor{blue}{Bible} & 0.49 & 0.64 & 0.43 \\ 
 \hline
 \textcolor{blue}{Software} & 0.40 & 0.5 & 0.23 \\ 
 \hline
 \hline
 \textcolor{brown}{Protein} & 0 & 0 & 0 \\ 
 \hline
 \textcolor{brown}{Facebook} & 0.10 & 0 & 0 \\ 
 \hline
 \textcolor{brown}{Power} & 0 & 0 & 0 \\ 
 \hline
 \textcolor{brown}{Hepth} & 0.05 & 0.05 & 0.05 \\ 
 \hline
 \hline
 \textcolor{blue}{N1} & 0.63 & 0.86 & 0.68 \\ 
 \hline
 \textcolor{blue}{N2} & 0.722 & 0.530 & 0.653 \\ 
 \hline
 \textcolor{blue}{N3} & 0.74 & 0.78 & 0.80 \\ 
 \hline
 \textcolor{blue}{N4} & 0.68 & 0.81 & 0.68 \\
 \hline
 \textcolor{blue}{N5} & 0.80 & 0.82 & 0.78 \\ 
 \hline
 \textcolor{blue}{N6} & 0.57 & 0.71 & 0.6 \\ 
 \hline
\textcolor{blue}{N7} & 0.39 & 0.48 & 0.37 \\ 
 \hline
 \textcolor{blue}{N8} & 0.79 & 0.79 & 0.79 \\ 
 \hline
 \hline
 \textcolor{brown}{N9} & 0 & 0 & 0 \\ 
 \hline
 \textcolor{brown}{N10} & 0 & 0 & 0 \\ 
 \hline
 \textcolor{brown}{N12} & 0.02 & 0 & 0 \\ 
 \hline
 \textcolor{brown}{N13} & 0.06 & 0 & 0 \\ 
 \hline
 \textcolor{brown}{N14} & 0 & 0 & 0 \\ 
 \hline
 \textcolor{brown}{N15} & 0.002 & 0 & 0 \\ 
 \hline
\end{tabular}
\caption {\label{tab:jacs} Jaccard index between nodes with highest coreness and equal number of high centrality nodes. Results clearly separate the two categories of networks into ones that have an RCC (\textcolor{blue}{blue}) and ones that do not have an RCC (\textcolor{brown}{brown}).}
\end{table}

\subsection{High core numbers to detect communities}
\label{sec:highc}
In existing techniques of community detection utilizing $k$-core structure~\cite{peng2014accelerating},
the network is reduced to its $k$-core subgraph, for a predetermined value of $k$, and the communities in the subgraph are computed. The vertices in  these communities are  used as seed nodes to propagate the community information to other vertices. 

We note that this process will succeed only if the communities are well represented in the reduced network. Therefore to test the applicability of this algorithm, we tabulate how the vertices in the innermost core are distributed across the communities. 


 Figure~\ref{fig:comms} plots the community ids of the networks in the x-axis and the number of nodes from the innermost core that are members of a particular community in the y-axis. 
 We include all communities whose at least one vertex is in the innermost core.
 

We again observe a separation between two class of networks. In one group, the vertices from the innermost core are spread over multiple communities, whereas in the other group, the vertices from the innermost core are concentrated in one or two, communities. Clearly, the first group is more suitable for the community detection algorithm described earlier. Figure~\ref{fig:commvis} shows the community distribution in the innermost core of two networks from our test suite. These results demonstrate that {vertices with high core numbers are not always distributed across multiple communities, and in some cases can be concentrated in only one community.}


\begin{figure*}[t]
  \begin{subfigure}[b]{0.15\textwidth}
    \includegraphics[width=\textwidth]{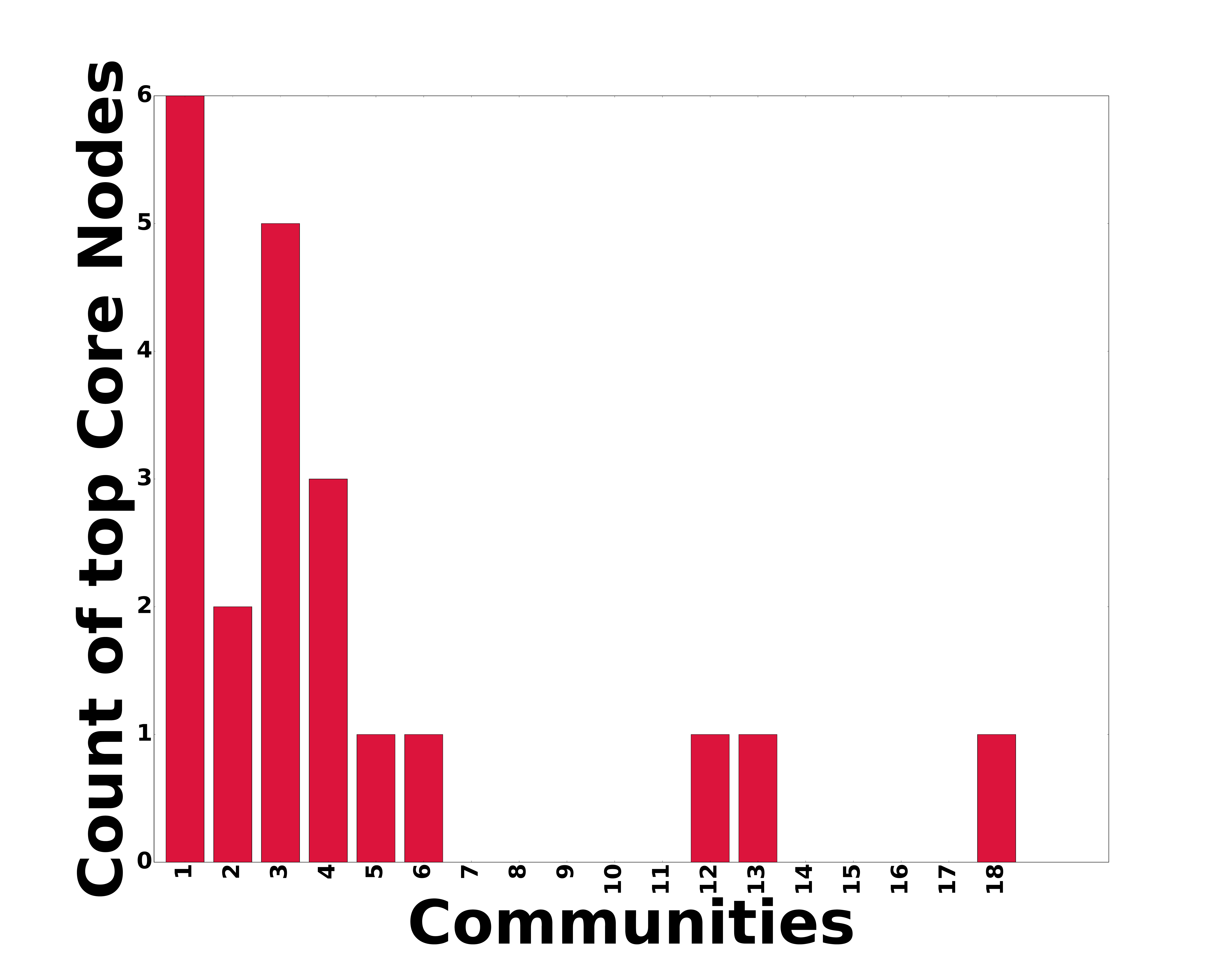}
    \caption{AS}
  \end{subfigure}
  \begin{subfigure}[b]{0.15\textwidth}
    \includegraphics[width=\textwidth]{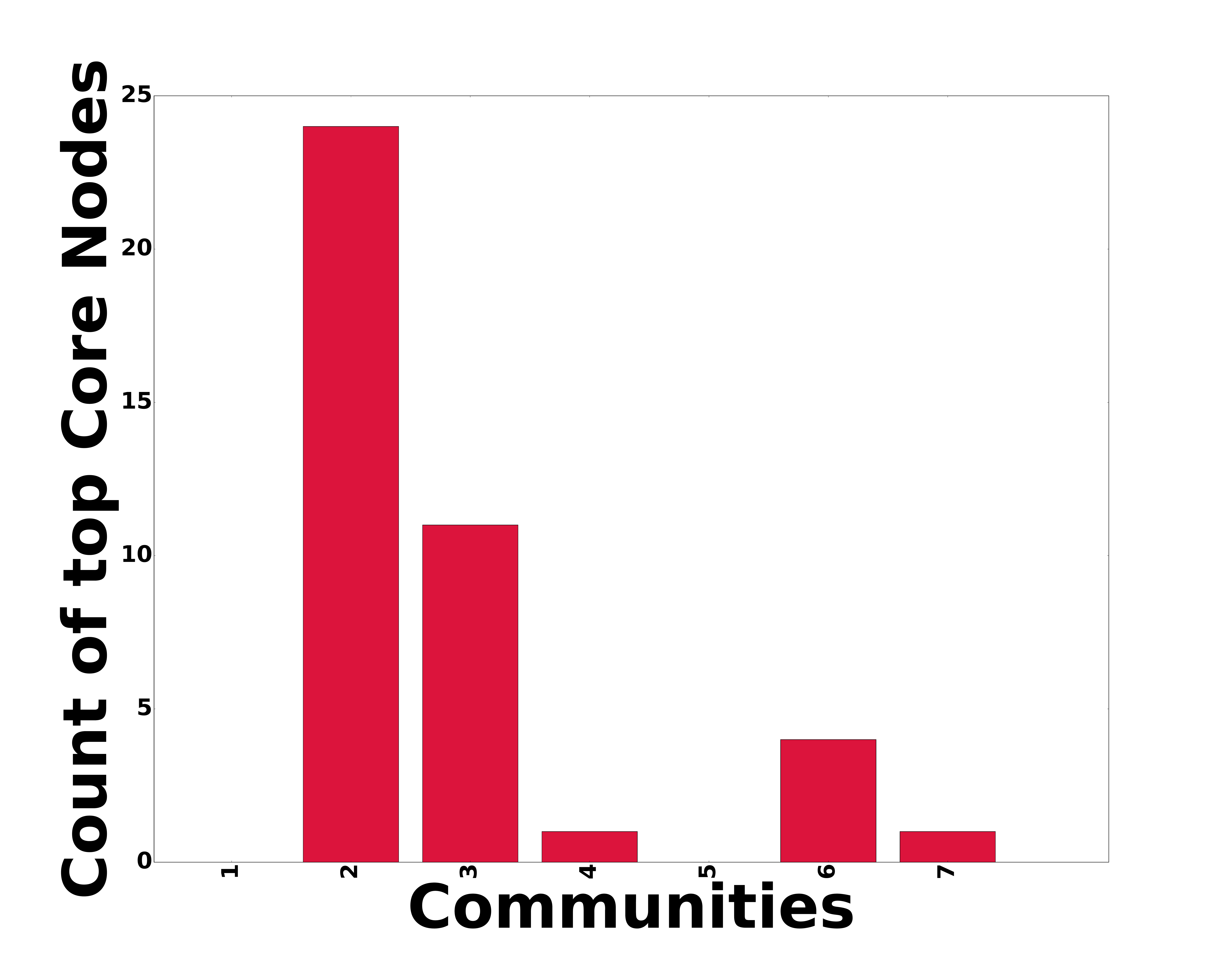}
    \caption{Bible}
  \end{subfigure}
  \begin{subfigure}[b]{0.15\textwidth}
    \includegraphics[width=\textwidth]{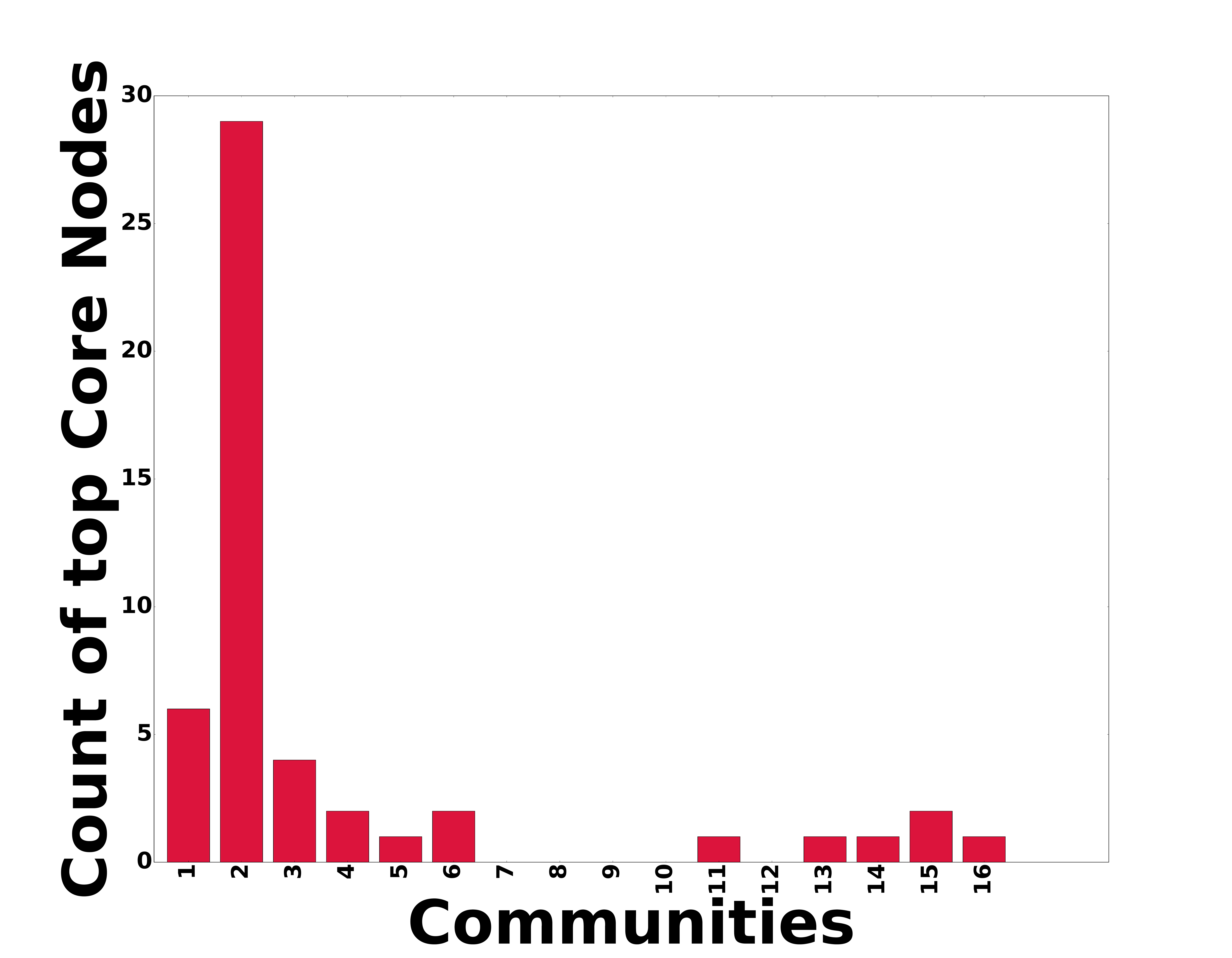}
    \caption{Caida}
  \end{subfigure}
  \begin{subfigure}[b]{0.15\textwidth}
    \includegraphics[width=\textwidth]{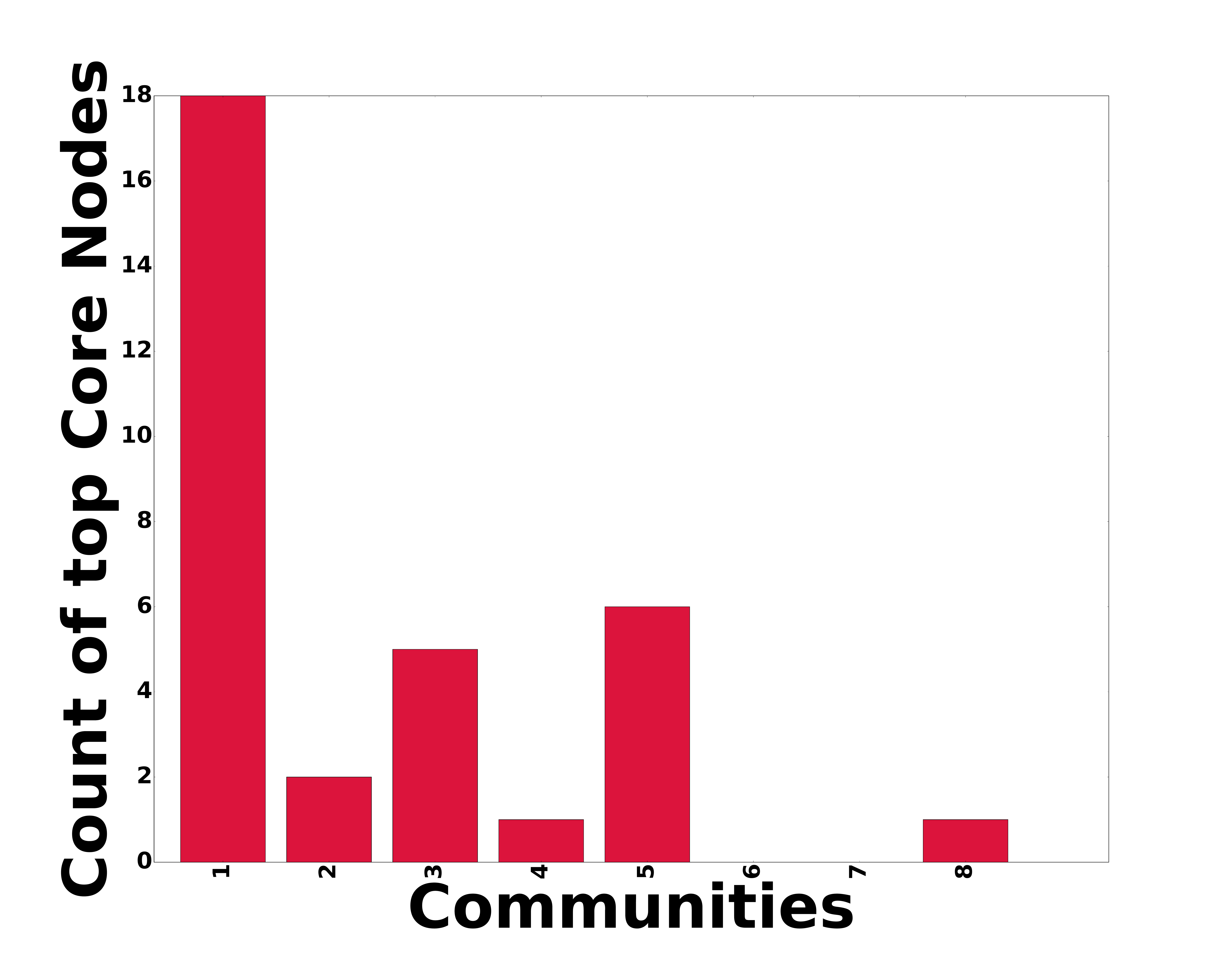}
    \caption{Software}
  \end{subfigure}
  \begin{subfigure}[b]{0.15\textwidth}
    \includegraphics[width=\textwidth]{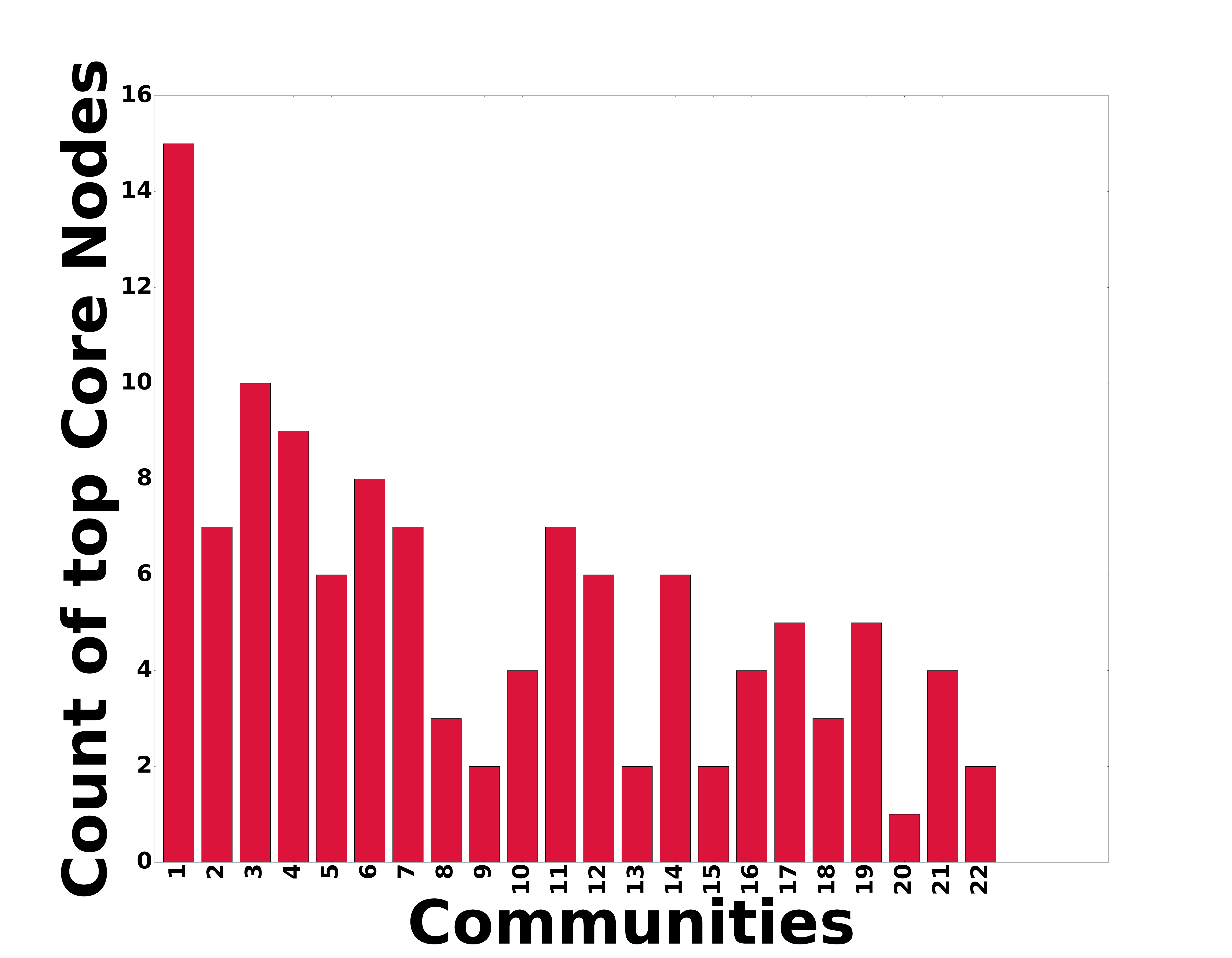}
    \caption{N6}
  \end{subfigure}
  \begin{subfigure}[b]{0.15\textwidth}
    \includegraphics[width=\textwidth]{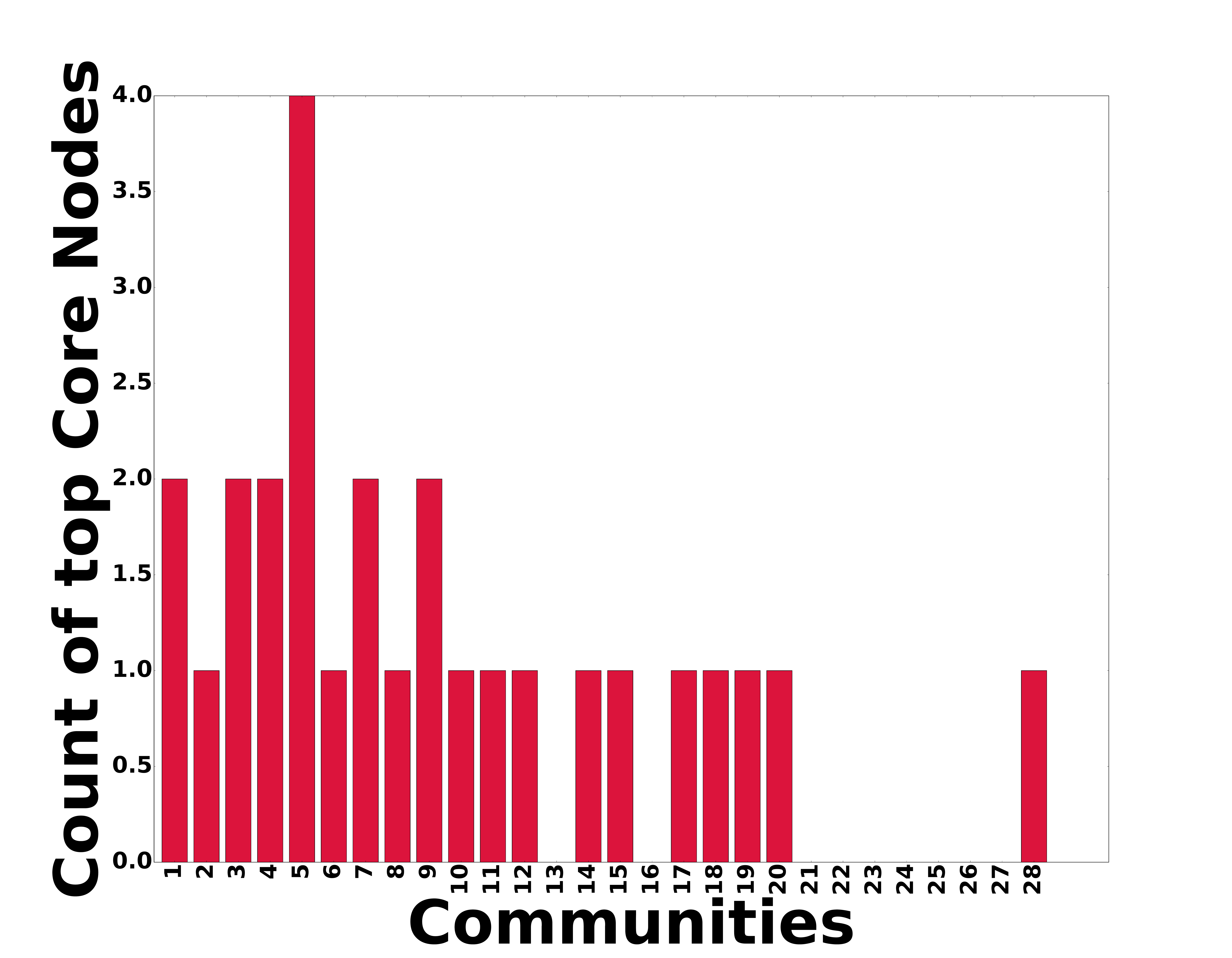}
    \caption{N7}
  \end{subfigure}

  \begin{subfigure}[b]{0.15\textwidth}
    \includegraphics[width=\textwidth]{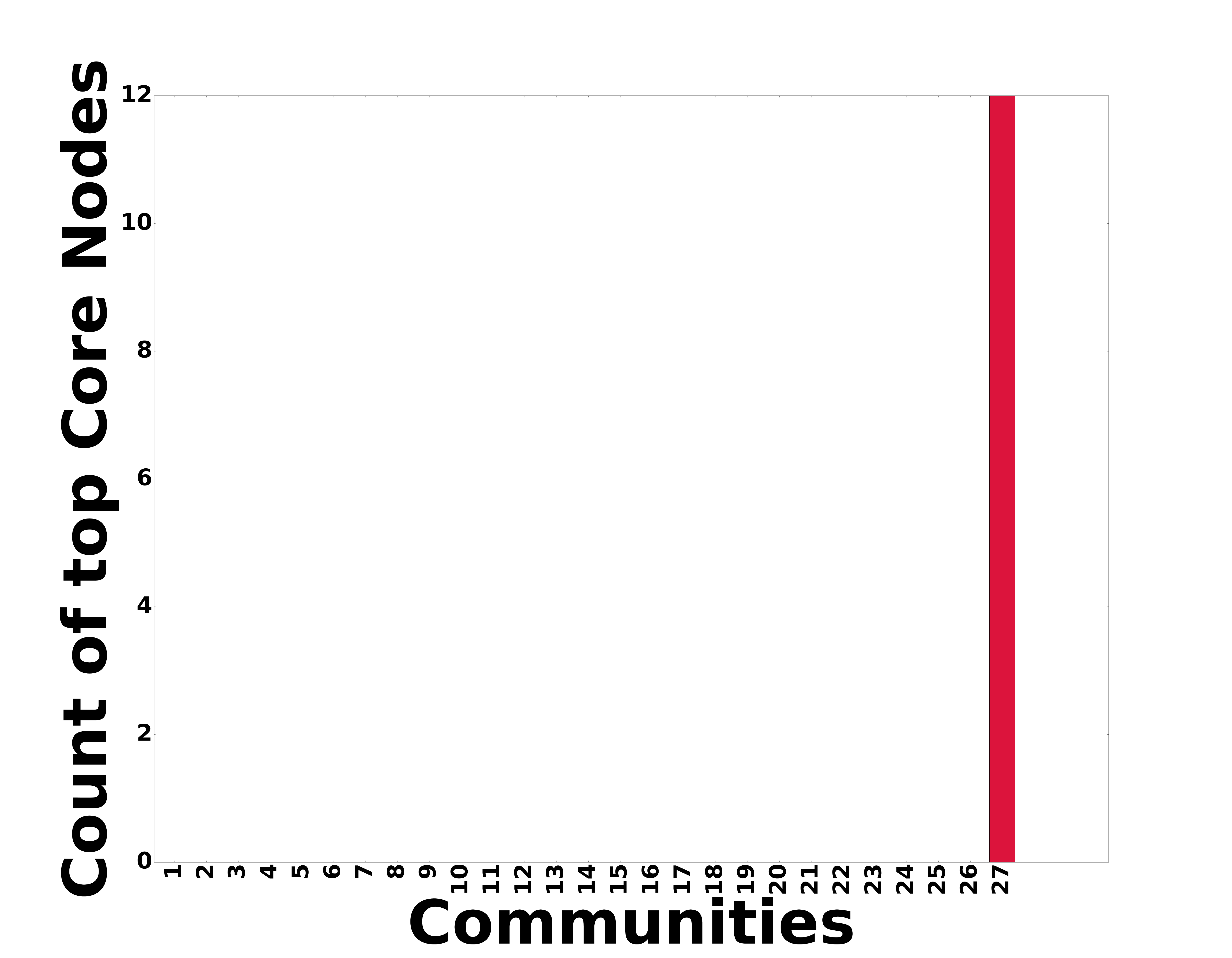}
    \caption{Power}
  \end{subfigure}
  \begin{subfigure}[b]{0.15\textwidth}
    \includegraphics[width=\textwidth]{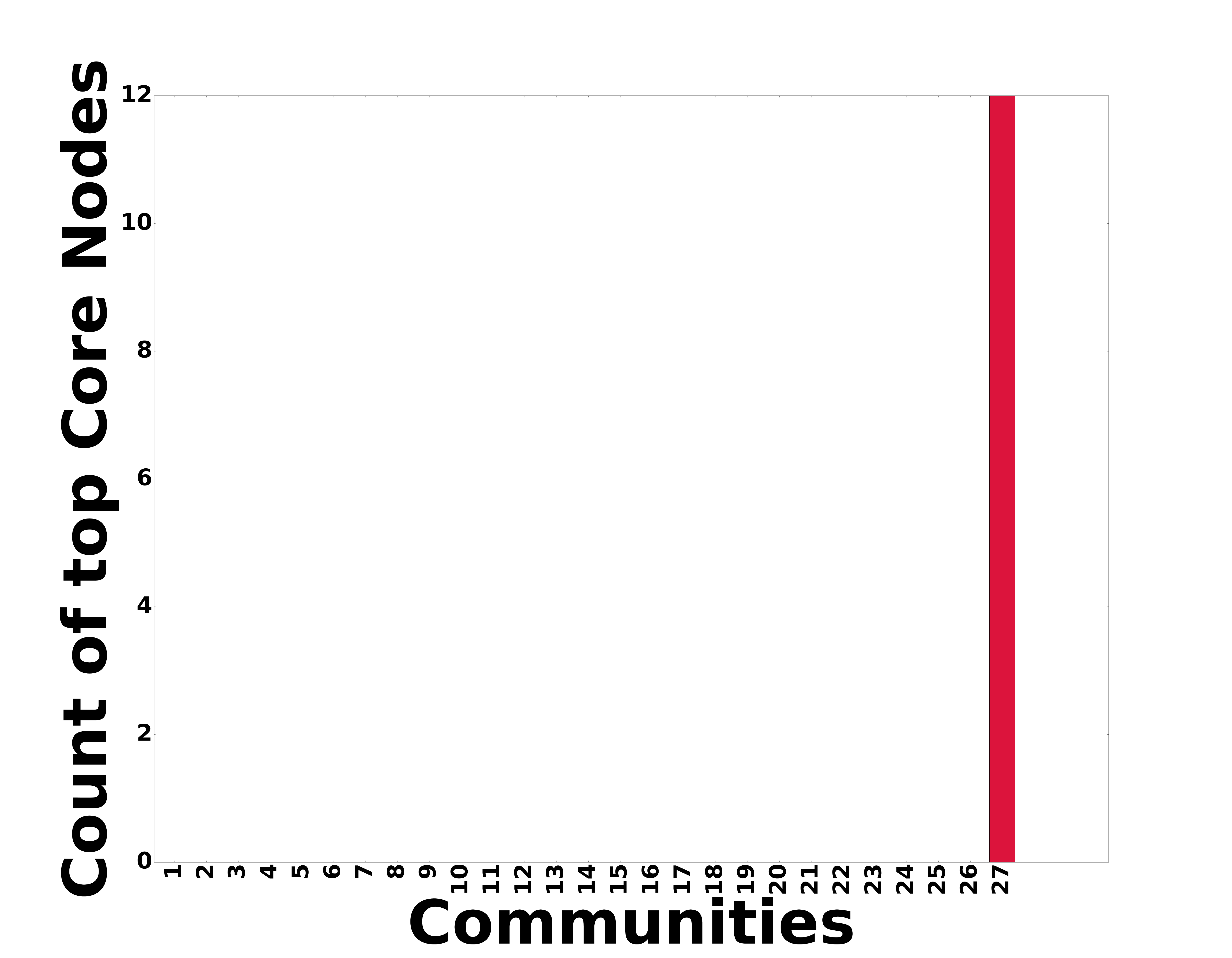}
    \caption{Protein}
  \end{subfigure}
  \begin{subfigure}[b]{0.15\textwidth}
    \includegraphics[width=\textwidth]{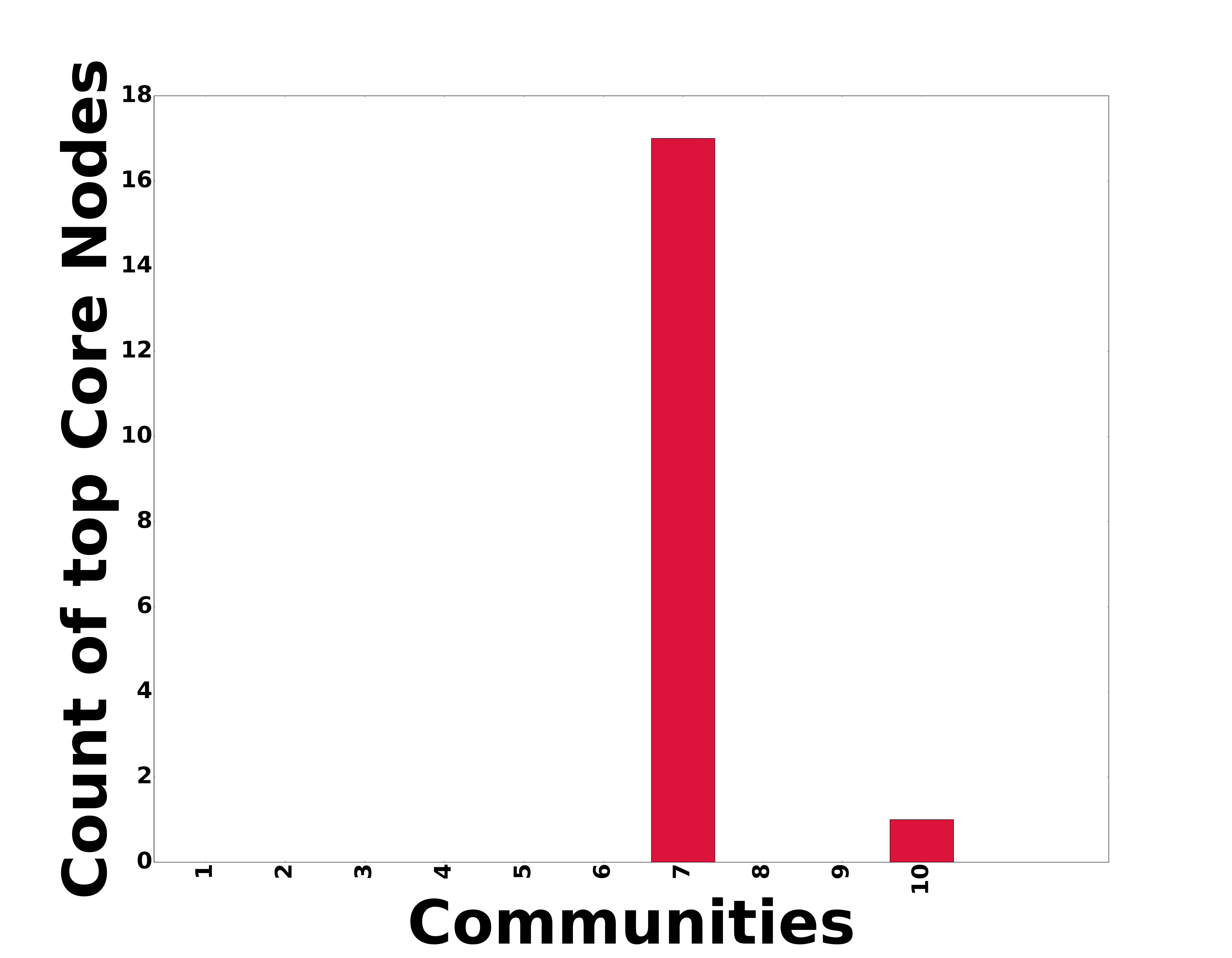}
    \caption{Hepth}
  \end{subfigure}
  \begin{subfigure}[b]{0.15\textwidth}
    \includegraphics[width=\textwidth]{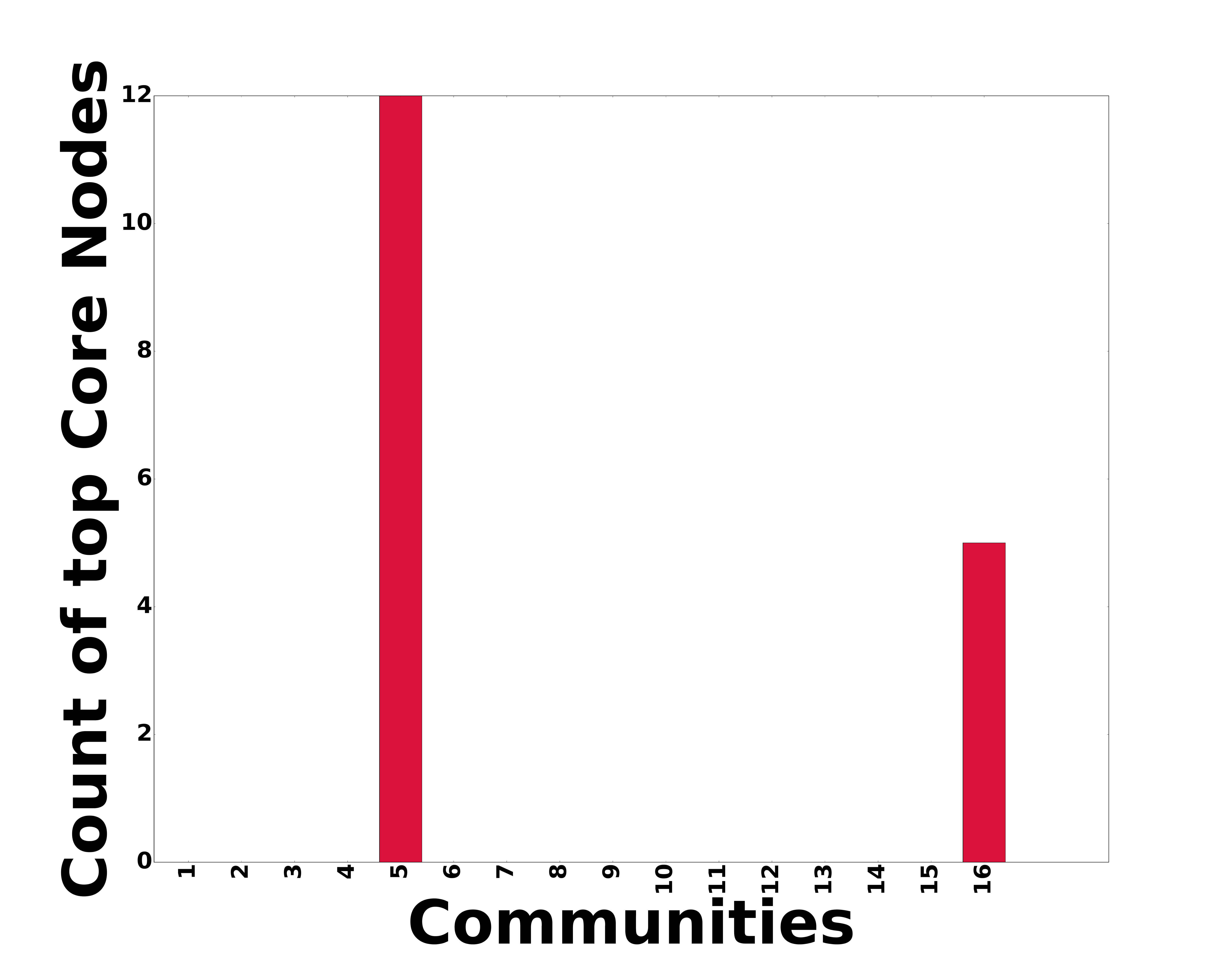}
    \caption{Facebook}
  \end{subfigure}
  \begin{subfigure}[b]{0.15\textwidth}
    \includegraphics[width=\textwidth]{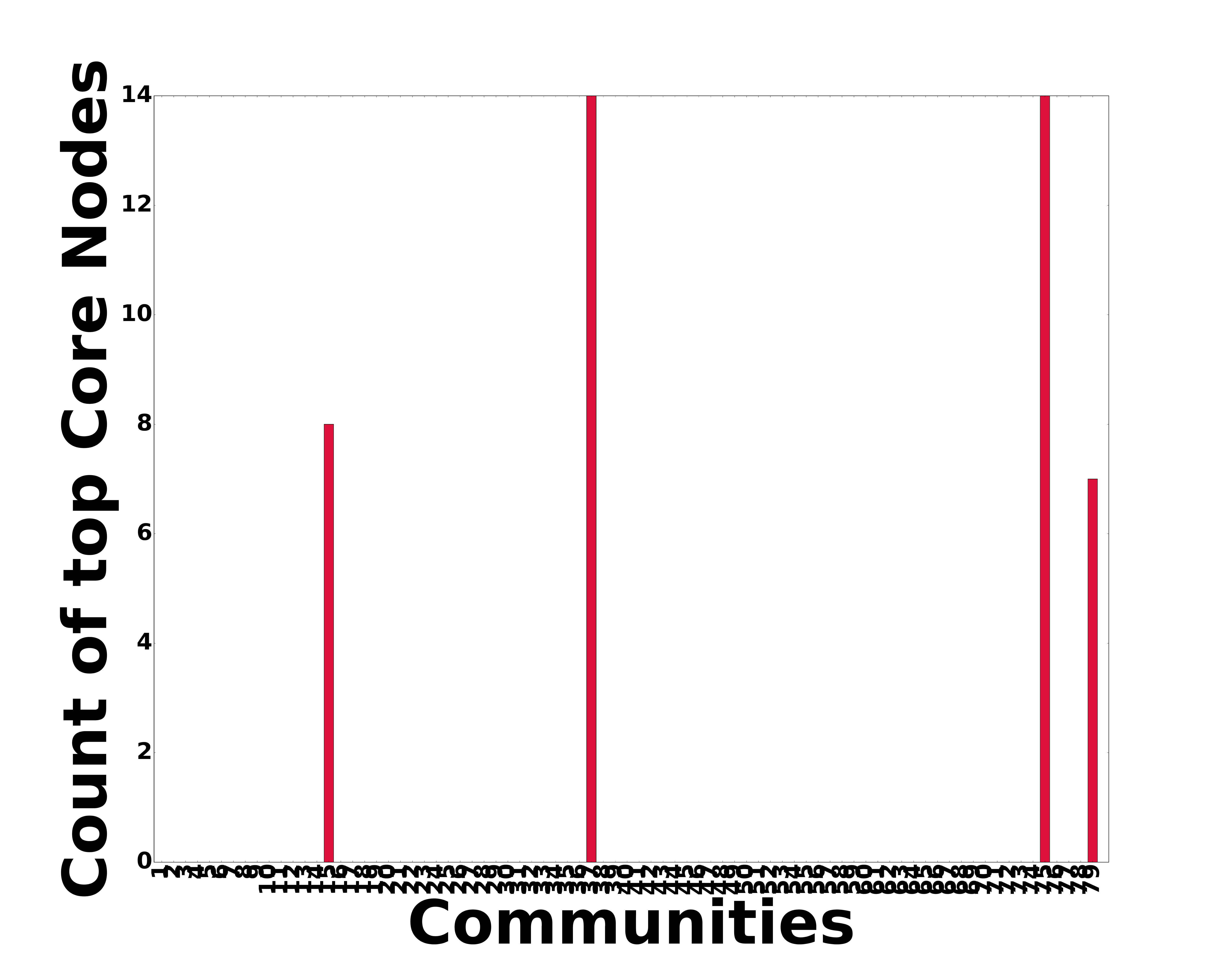}
    \caption{N8}
  \end{subfigure}
  \begin{subfigure}[b]{0.15\textwidth}
    \includegraphics[width=\textwidth]{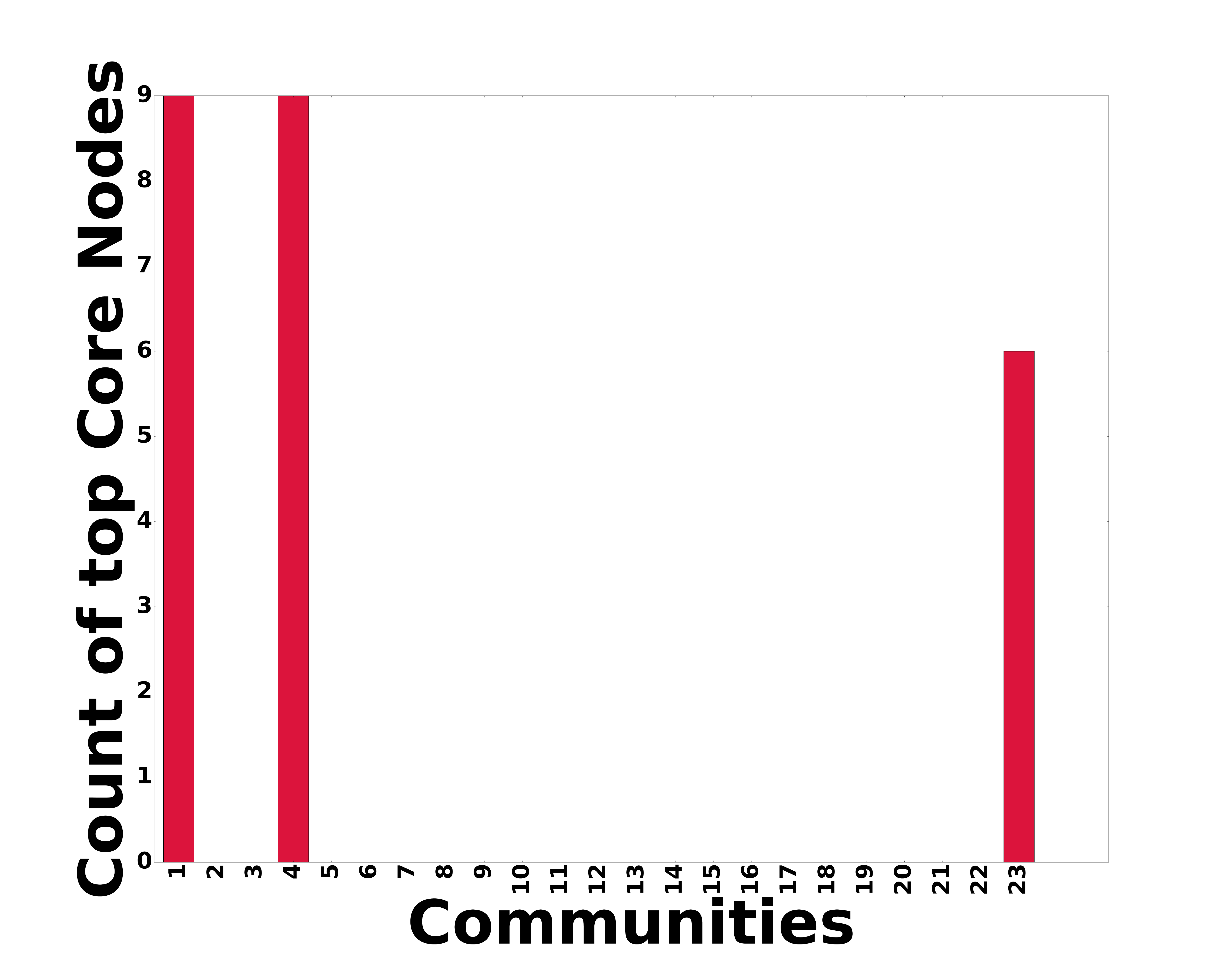}
    \caption{N9}
  \end{subfigure}
  \caption{\label{fig:comms}Distribution of innermost core nodes in the different communities of the network. The X-axis indicates the community ids of a network ordered in terms of number of nodes present in that community. Y-axis indicates the number of innermost core nodes in a particular community with the id on the x-axis. The X-axis stretches includes all communities that contain the nodes from the  innermost core. Note that for networks with an RCC (top) the high core vertices are more spread out among multiple communities as compared to those that do not have an RCC (bottom).} 
\end{figure*}

\begin{figure}[t!]
    \centering
    \begin{subfigure}[t]{0.25\textwidth}
        \centering
        \includegraphics[height=2in]{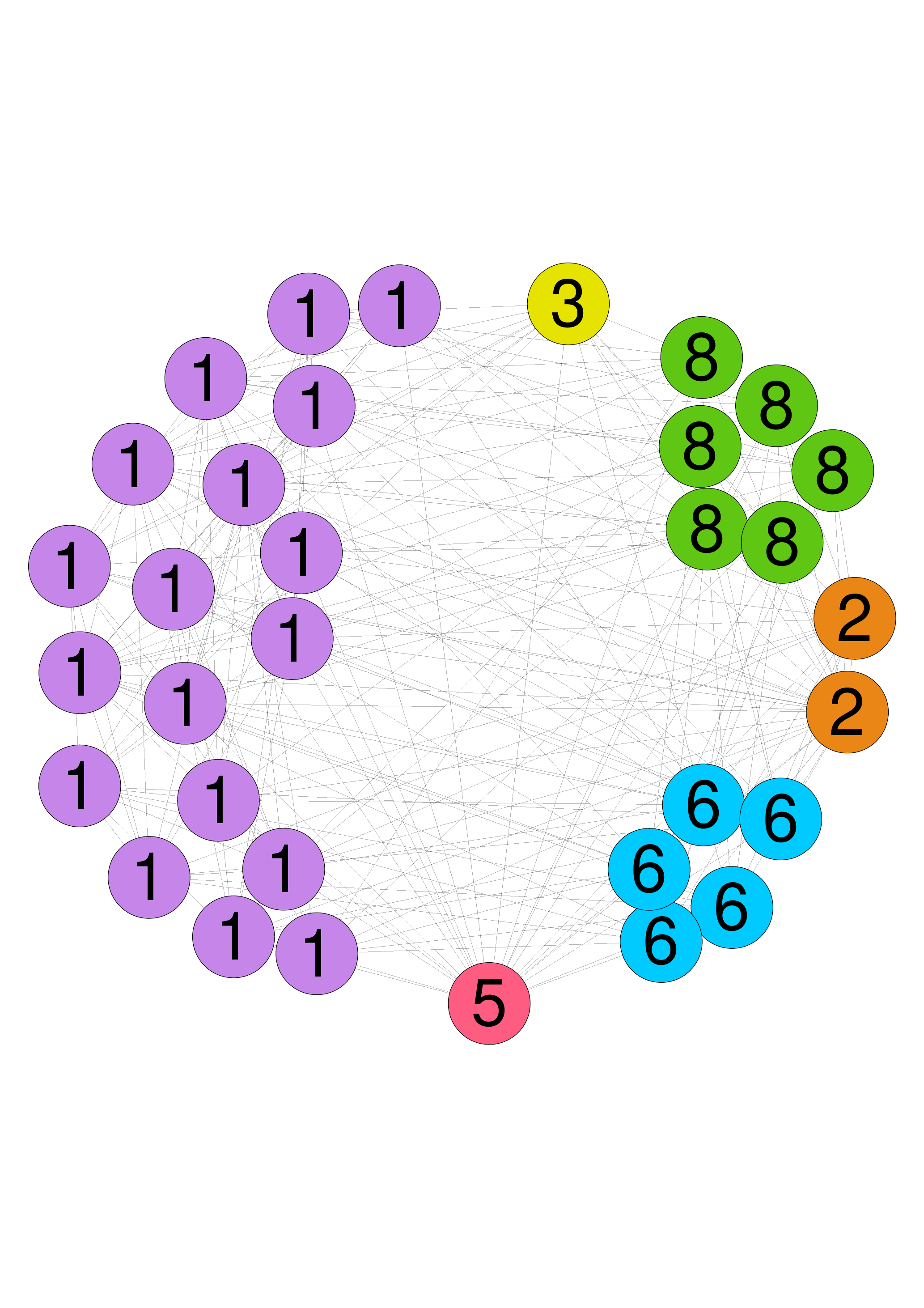}
        \caption{Software}
    \end{subfigure}%
    ~
    \begin{subfigure}[t]{0.25\textwidth}
        \centering
        \includegraphics[height=2in]{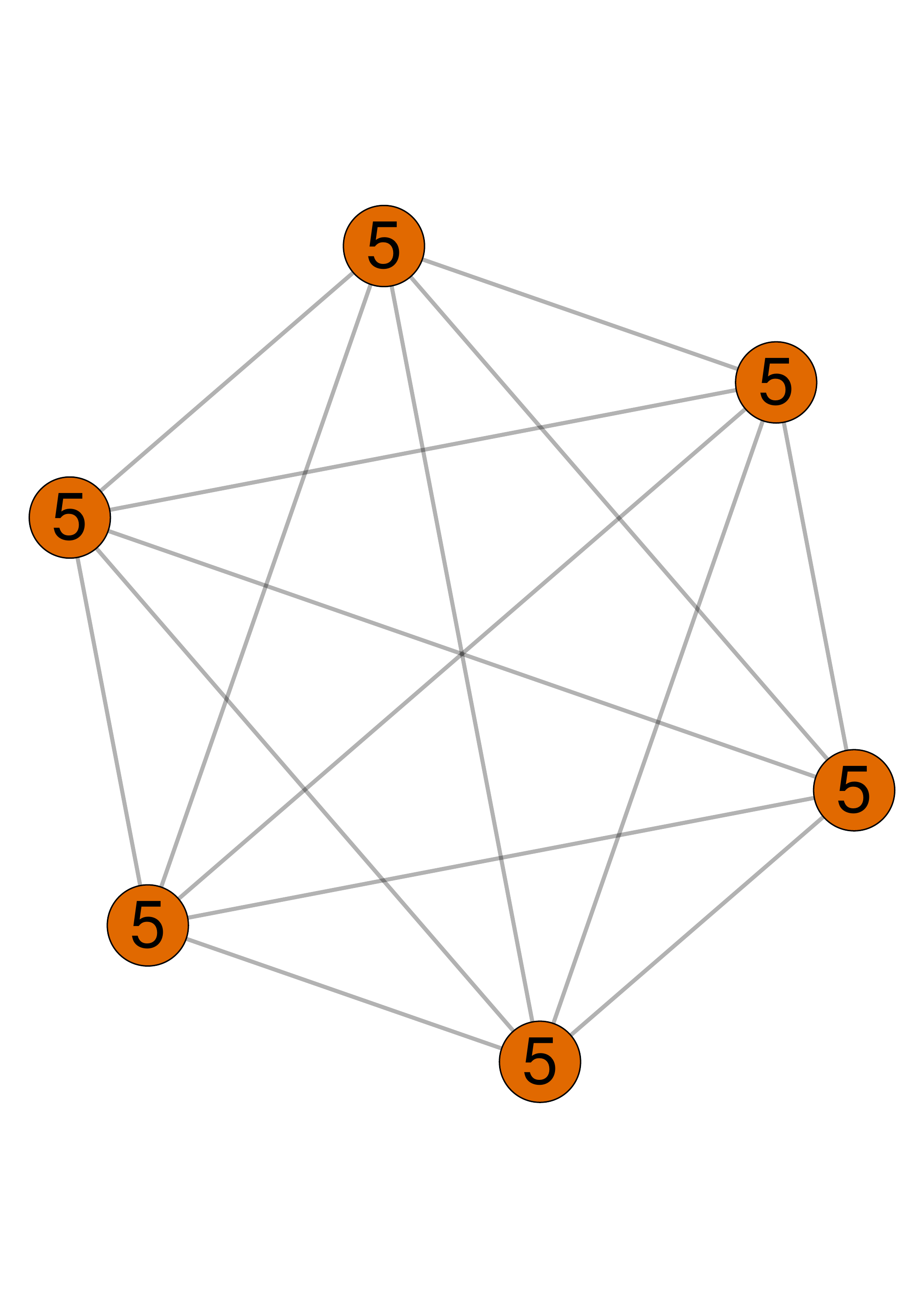}
        \caption{Protein}
    \end{subfigure}%
    \caption{\label{fig:commvis}Visualization of the subgraph formed using the innermost core nodes. Here nodes belonging to the same community are annotated with the same color and id. In the software network the innermost core clearly has nodes from several communities. In the protein network all nodes in the innermost core belong to the same community.}
  \end{figure}

\subsection{Evidence of rich centrality club} These motivating experiments demonstrate the existence two groups of networks.  One group consists of networks where the vertices from the inner cores have high correlation with vertices of different high centrality metrics and can be used as seed nodes for community detection. The other group consists of networks where the vertices from the inner cores have no correlation with other high centrality metrics, and are concentrated in one or two communities.  

To visualize how rich centrality clubs are formed in the innermost cores, we divide the vertices in the network into  two sets of nodes, based on whether they are part of the innermost core or not. We partition the nodes outside the innermost core into their respective communities and combine each community into a single supervertex. The set of nodes of  the innermost core are also combined into a supervertex. Two supervertices are connected if there is at least one edge between the nodes comprising them. 

\begin{figure}[t!]
    \centering
    \begin{subfigure}[t]{0.25\textwidth}
        \centering
        \includegraphics[height=1.4in]
        {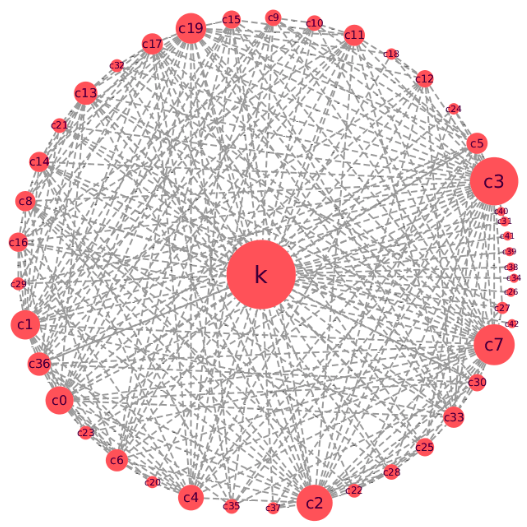}
        \caption{Caida}
    \end{subfigure}%
    ~
    \begin{subfigure}[t]{0.25\textwidth}
        \centering
        \includegraphics[height=1.4in]
        {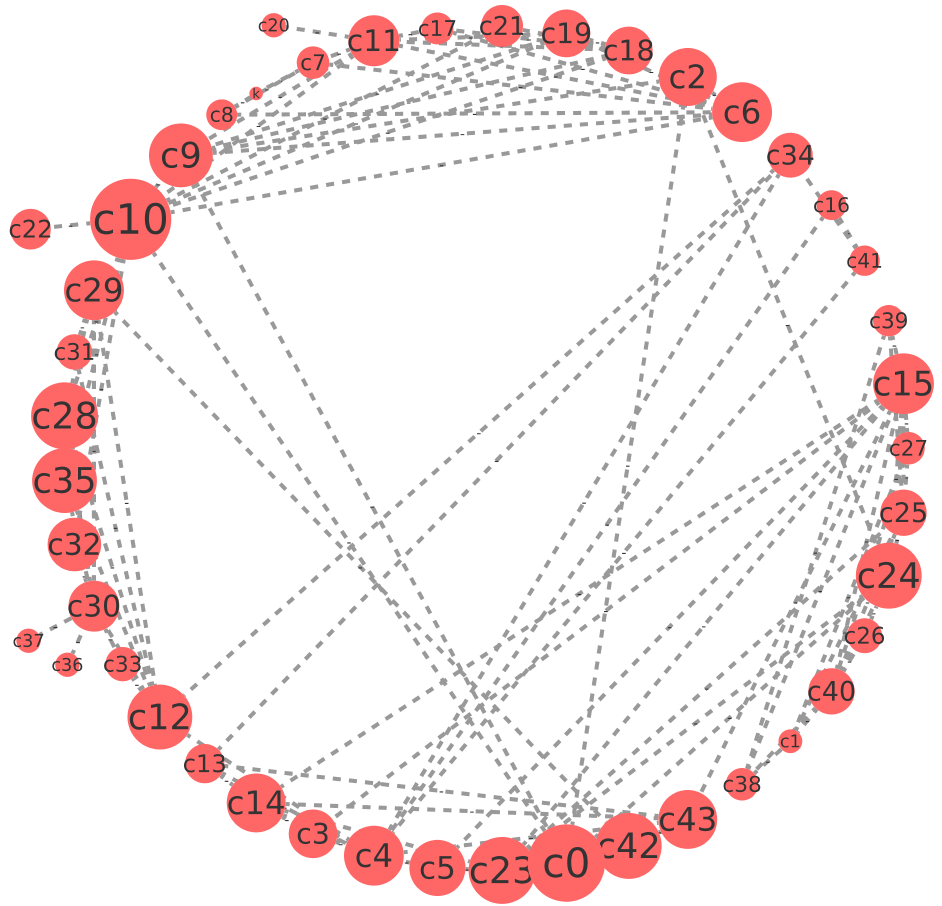}
        \caption{Power}
    \end{subfigure}%
    \caption{\label{fig:superv}Network formed by two category of supervertices, i.e, communities (denoted by $c_1$, $c_2$, $\dots$) and the innermost core (denoted by $k$). Two supervertices are connected if the corresponding nodes from which they are formed are connected by at least an edge. Higher size of supervertex imply higher average centrality of constituent vertices}
  \end{figure}

Figure~\ref{fig:superv} shows a visualization of the reduced network for two benchmark networks. Each node is labeled as $c_x$ for communities and $k$ for the innermost core. The supervertices are ordered by size with respect to average centrality(closeness and betweenness) of constituent vertices.  For the network Caida, which is in the first group, the supervertex corresponding to the innermost core has significantly high centrality and is in the centre. For the network, Power, which was in the second group, there are no distinctively high centrality supervertex. 

Since in the first group, the high centrality nodes are in the innermost cores and since by their definition these  vertices are connected to each other, our experiments demonstrate that {\em rich club of high centrality vertices} is formed in these networks. In the next section, we present the topological property of these networks that lead to the formation of RCC.








\section{Properties of Networks Containing Rich Centrality Clubs}\label{sec:part2}
We define structural properties of networks containing RCC and present empirical results to support our definition. In section~\ref{sec:theory} we provide theoretical rationale for our definition.

\begin{figure*}[t!]
    \centering
    \begin{subfigure}[t]{0.3333\textwidth}
        \centering
       \includegraphics[width=\textwidth,height=3.5cm]{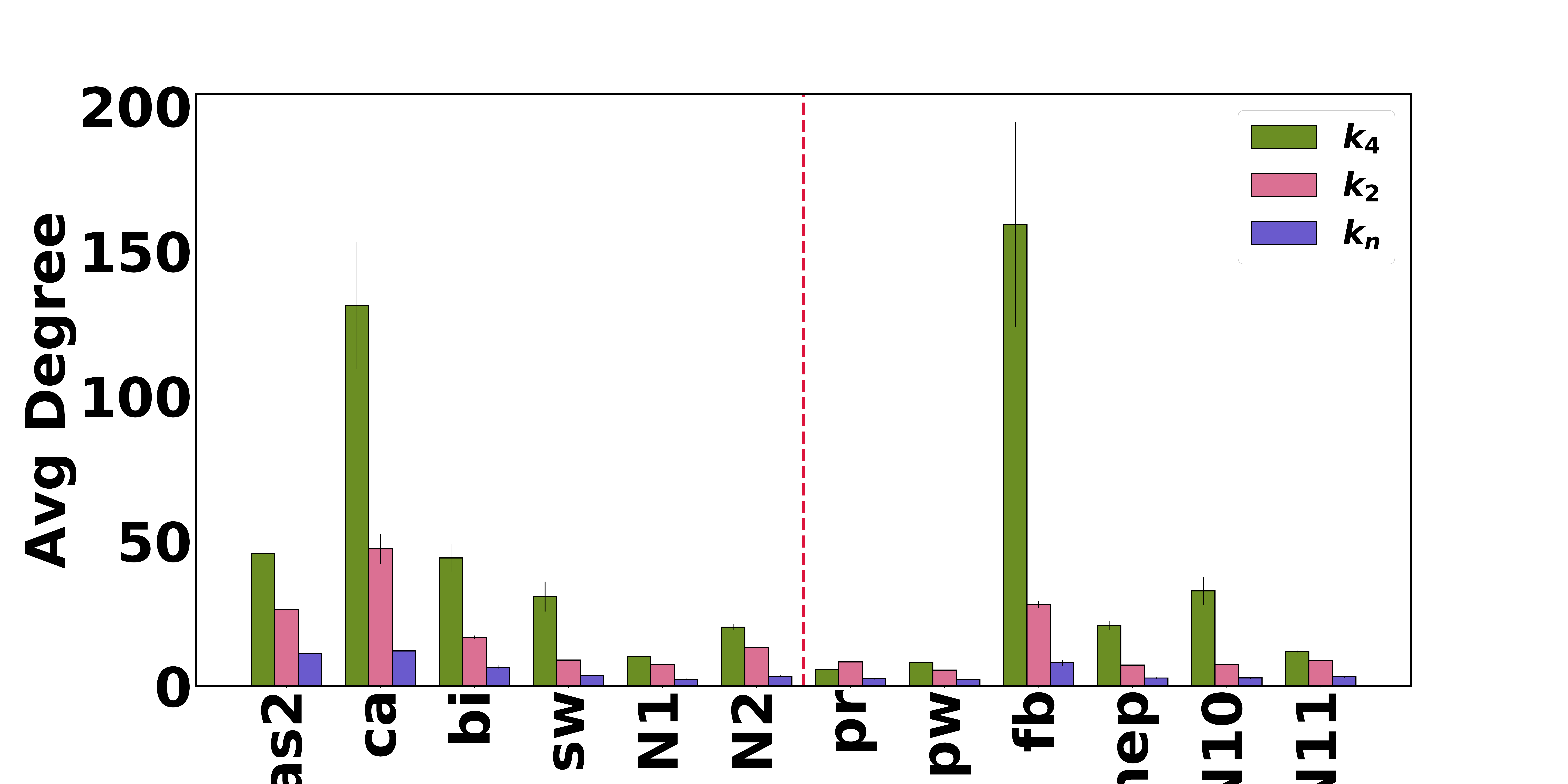}
        \caption{}
    \end{subfigure}%
    ~
    \begin{subfigure}[t]{0.3333\textwidth}
        \centering
     \includegraphics[width=\textwidth,,height=3.5cm]{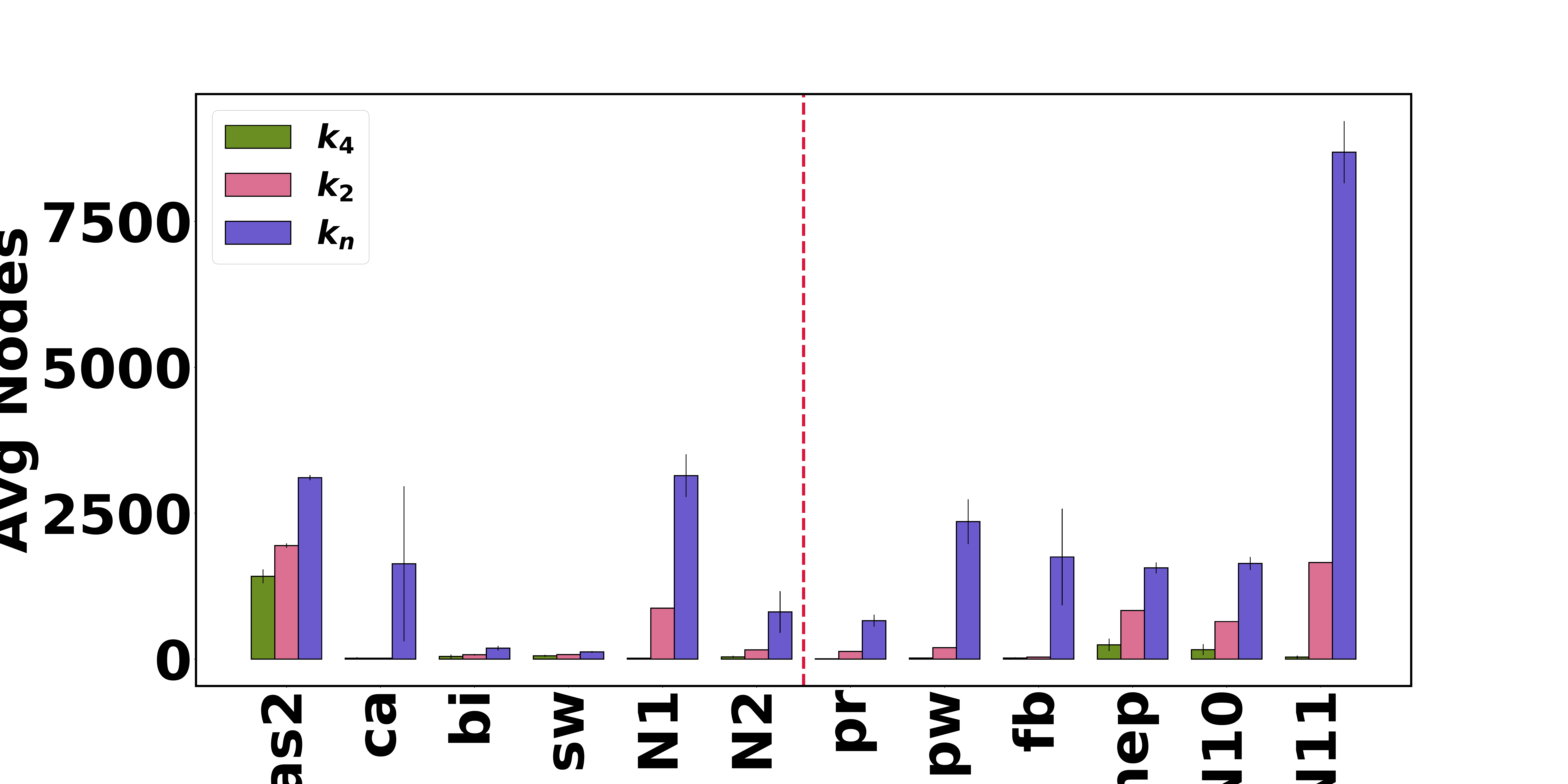}
        \caption{}
    \end{subfigure}%
    \begin{subfigure}[t]{0.3333\textwidth}
        \centering
     \includegraphics[width=\textwidth,height=3.5cm]{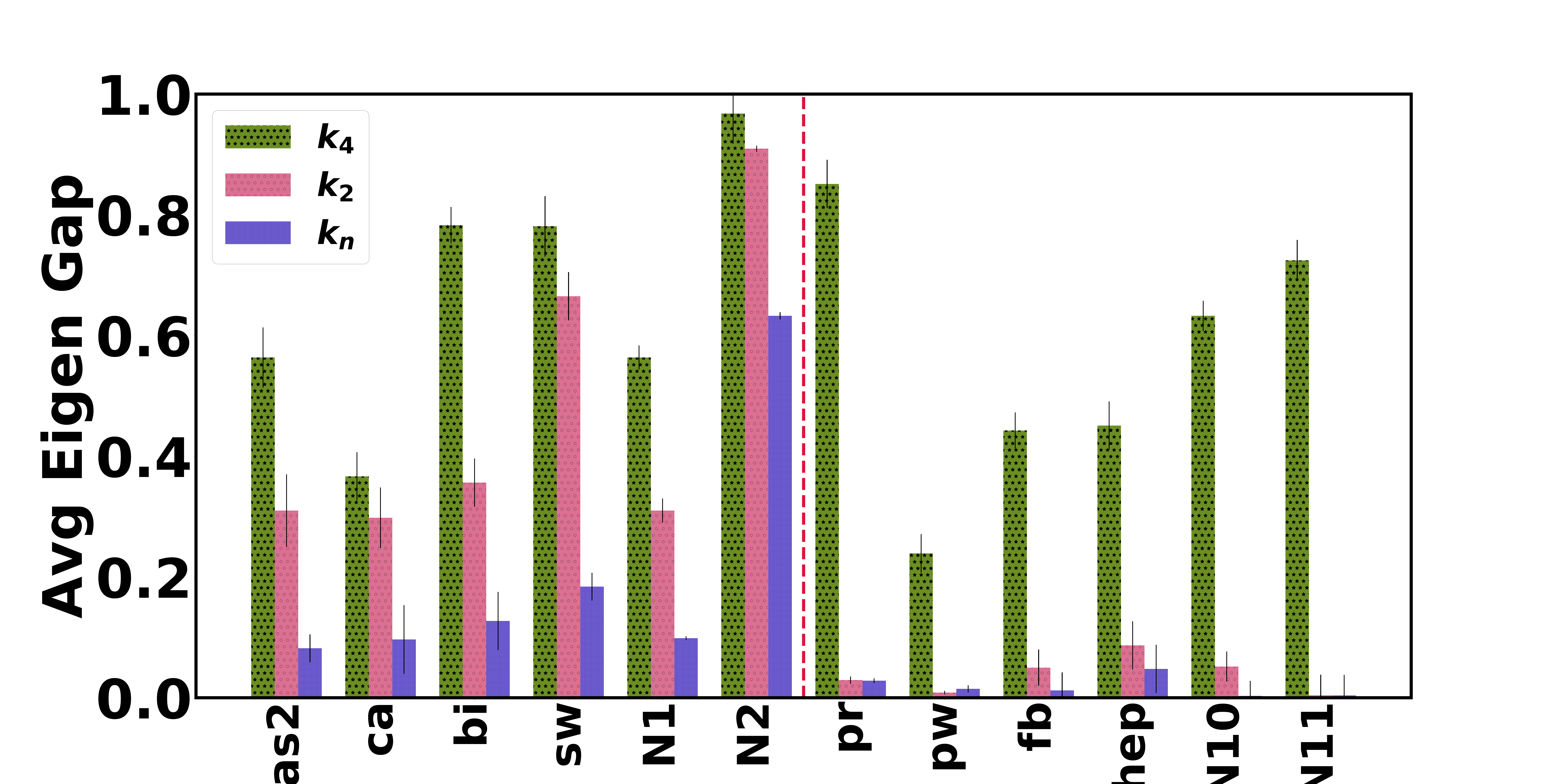}
        \caption{}
    \end{subfigure}%
    \caption{ \label{fig:shell_fig}(a) The average degree of, and (b) the number of nodes in, the shell based subgraphs for different buckets of shells for each network. (c) Average eigengap of the shell based subgraphs for different buckets of shells for each network. Results show graceful degradation for the networks with an RCC while an abrupt fall for the networks with no RCC.}
    
  \end{figure*}

\subsection{Formal definition and rationale}
Let the subgraph induced by the vertices in shell $k$ and their neighbors be  $S_{k}$. Let $d_k$ be the average degree  and $n_k$, the number of nodes in $S_{k}$. Let $\lambda_{k}$ be the  second smallest eigen value of the normalized Laplacian matrix of $S_{k}$. Let the average distance of a vertex in shell $S_a$ to a vertex an inner shell $C_{b}$, $a < b$, be $r_{ab}$. 

Given these parameters, we state that a network will contain a RCC if the following properties hold.
\begin{enumerate}
\item  If for two shells $k1$ and $k2$, $k1 < k2$, then $d_{k1} < d_{k2}$ and $n_{k1} > n_{k2}$.
\item For all shells $S_{k}$, $\lambda_{k} > \alpha$
\item For all shells $S_{k}$, $r_{kx} < \beta$, where $C_x$ is a high numbered core, with density close to 1. 
\end{enumerate}

The first property requires that the shells have progressively smaller number of vertices, and become more dense from outer to inner shells.  The second property provides the upper bound of the second smallest eigenvalue, and in turn, to the Cheeger constant at each shell. The higher $\lambda_k$, the more expander-like the associated shell. If the shell has multiple components, then each of them should maintain this property. The third property states that the hops to travel from the outer shells to inner cores should be small. 

The values of the parameters $\alpha$ and $\beta$ are determined based on size and density of the whole network. As per our experiments, setting $\alpha >.5$ and $\beta <4$ can clearly distinguish between networks that contain RCC and  those that do not.


\subsection{Density of shells}

The first condition is a feature of the core-periphery structure of almost all scale-free networks, whether they contain RCC or not. To demonstrate this, we first subdivide the vertices into subgraphs $S_k$. For each, we compute the average degree and the number of nodes. Since the number of shells varies across networks, for uniform presentation of the results we divide our results into three buckets. Starting from innermost we place the first $25\%$ shells in the first bucket($k_4$). The next $25\%$ falls in the second bucket($k_2$) and the final $50\%$ falls in the last bucket($k_n$). For each bucket we calculate the mean and the standard deviation of the average degrees and number of nodes classified in that bucket. These values are plotted in Figure~\ref{fig:shell_fig}(a) (average degree) and (b) (number of nodes). As seen from the figure with the exception of slight deviations, the first property is maintained in both sets of networks.

\subsection{Eigenvalue of shells} 

For each $S_k$, we compute the normalized Laplacian (see section~\ref{sec:prelims}), extract its spectrum using eigenvalue decomposition, and compute the eigengap. \if{0}Since the number of shells varies across networks, for uniform presentation of the results we divide our results into three buckets. Starting from innermost we place the first $25\%$ shells in the first bucket. The next $25\%$ falls in the second bucket and the final $50\%$ falls in the last bucket.\fi For each bucket as defined  in the previous section, we calculate the average eigengap and the standard deviation. These values are plotted in Figure~\ref{fig:shell_fig}(c).

We observe that in graphs where we assume that RCC exists, there is a slow decline of the average eigengap. The Cheeger constant is high in the inner shells and gradually decreases from the inner to the outer shells. In the other group of networks, there is an abrupt fall in the average eigengap after the first bucket of inner shells. The first group can be bound by a large $\alpha$ than the second group, thus corroborating the second property.


\subsection{Distance between shells}
The third property enforces that on average two vertices in outer shells are more likely to be connected through inner dense shells. We show this in Figure~\ref{fig:shell_dist}, where in networks with an RCC, the average shortest distance of the nodes in the outer shells to the innermost ($k_{max}$) and the second innermost shells ($k_{max-1}$) is low (2-3) compared to networks without an RCC (10-50). 

\begin{figure}
\centering
\includegraphics[scale=0.08]{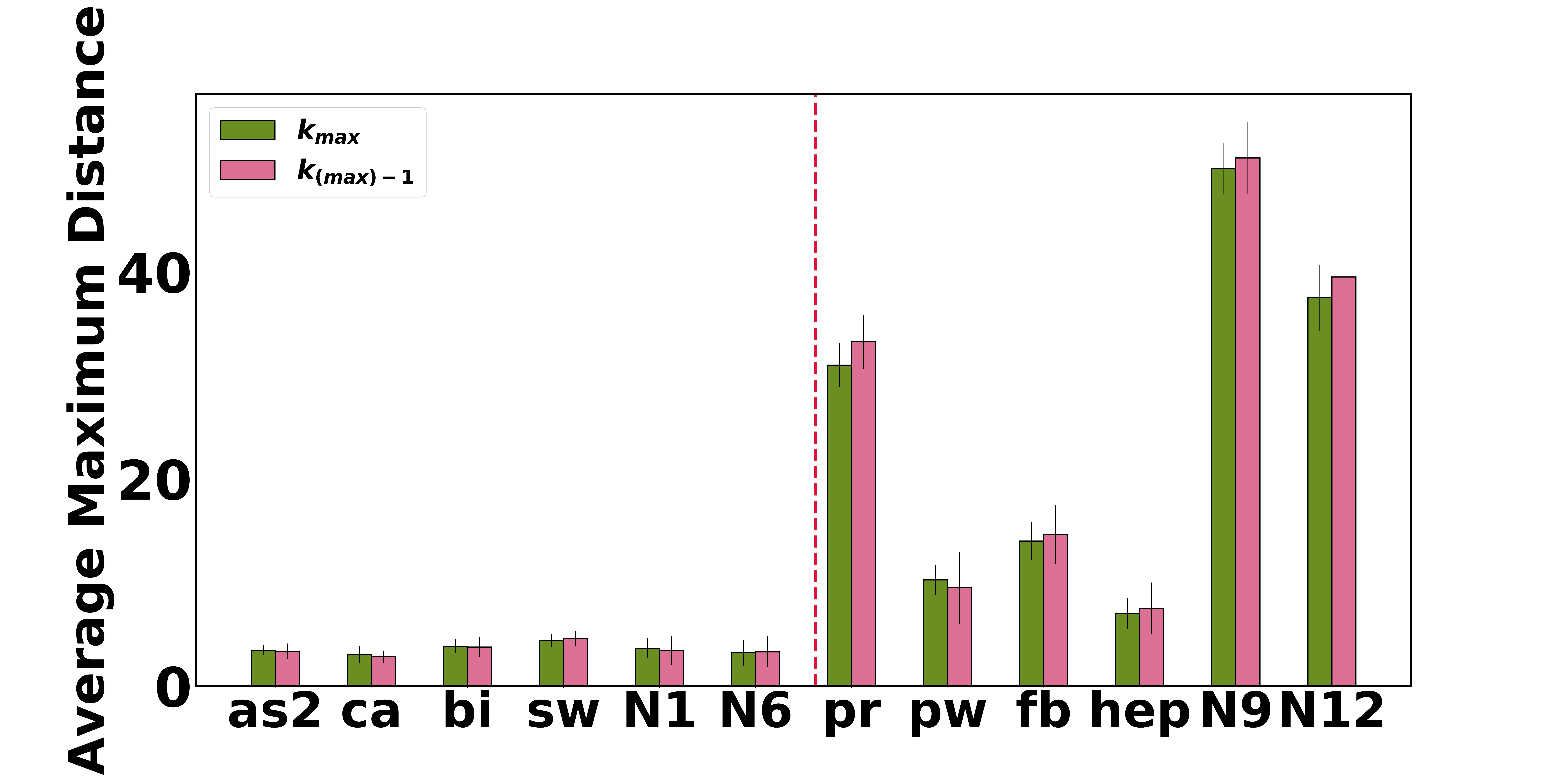}
\caption{\label{fig:shell_dist} The average shortest distance of a node in the outer shells to a node in the innermost ($k_{max}$) and the second innermost shell ($k_{max-1}$)}
\end{figure}

\section{Application}\label{sec:application}
In this section, we demonstrate how the presence of RCC can be leveraged in some important applications.
\subsection{RCC as influential nodes}
\label{sec:spreads}
Vertices, which when selected as seed nodes can accelerate the diffusion process in networks are known as influential nodes. 
We hypothesize that the RCC, if present in a network, is a natural choice for the seed nodes for spreading. In order to test this hypothesis we execute a diffusion process adapted from~\cite{maiya2014expansion} on two groups networks in our dataset, i.e., those with an RCC and those without. 


We choose a seed set size of  $\mathit{S}$\footnotetext{Our experiments with different value of $\mathit{S}$ yield similar results.} (here we show results for $\mathit{S}=10$) and populate this set preferentially on the basis of highly connected nodes which includes high centrality nodes (degree, closeness, betweenness), innermost shell nodes (these are nodes from within the RCC in networks that demonstrate its presence) and a random set of nodes. These set of seed nodes initially have an information and they pass it to all their neighbors using a flooding based broadcast approach (all neighbors of an informed vertex get informed).
This approach spreads the message very quickly and hence modified versions have been used in peer-to-peer networks for sending queries and searching \cite{jiang2008lightflood}. Although in real world systems this method is difficult to implement due to scalability issues, our goal here is to study how effective the vertices in the RCC are as seed nodes.
 
The x-axis in Figure~\ref{fig:spread} shows the fraction of vertices that have received the message and the y-axis the steps to reach these fraction of vertices. The networks form two groups. In one group that demonstrate the presence of RCC, the vertices from the innermost core are effective seed nodes for broadcasting and the time is comparable to the time when high centrality vertices are selected as seeds. The other group is when the vertices from the innermost core perform very poorly as seed nodes. The time to spread the information is equal to or worse than a random selection. These results show that {\em only vertices in the innermost core in networks that demonstrate the presence of RCC are effective as seed nodes for spreading information}.

\begin{figure*}
  \begin{subfigure}[b]{0.15\textwidth}
    \includegraphics[width=\textwidth]{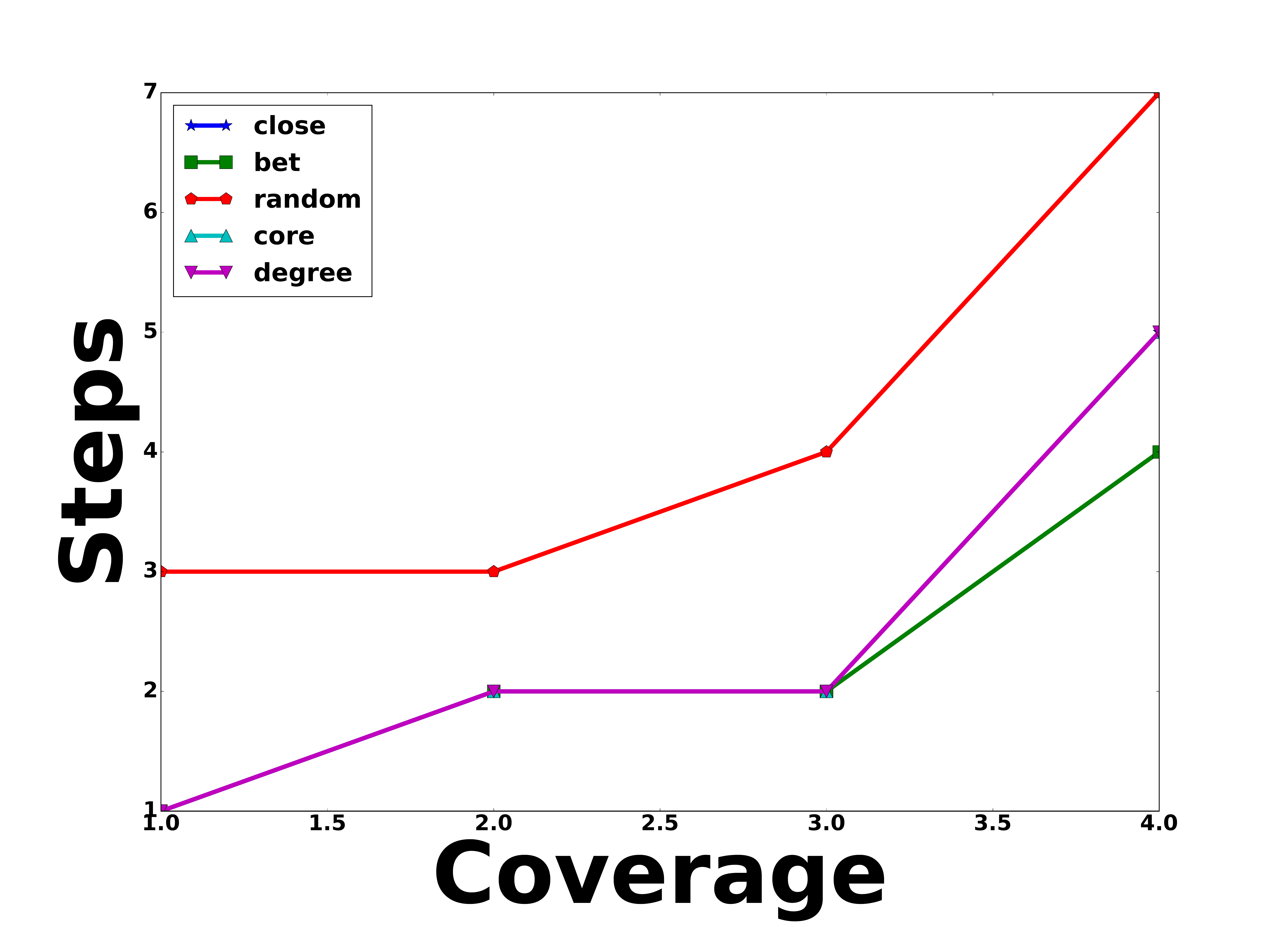}
    \caption{AS}
  \end{subfigure}
  \begin{subfigure}[b]{0.15\textwidth}
    \includegraphics[width=\textwidth]{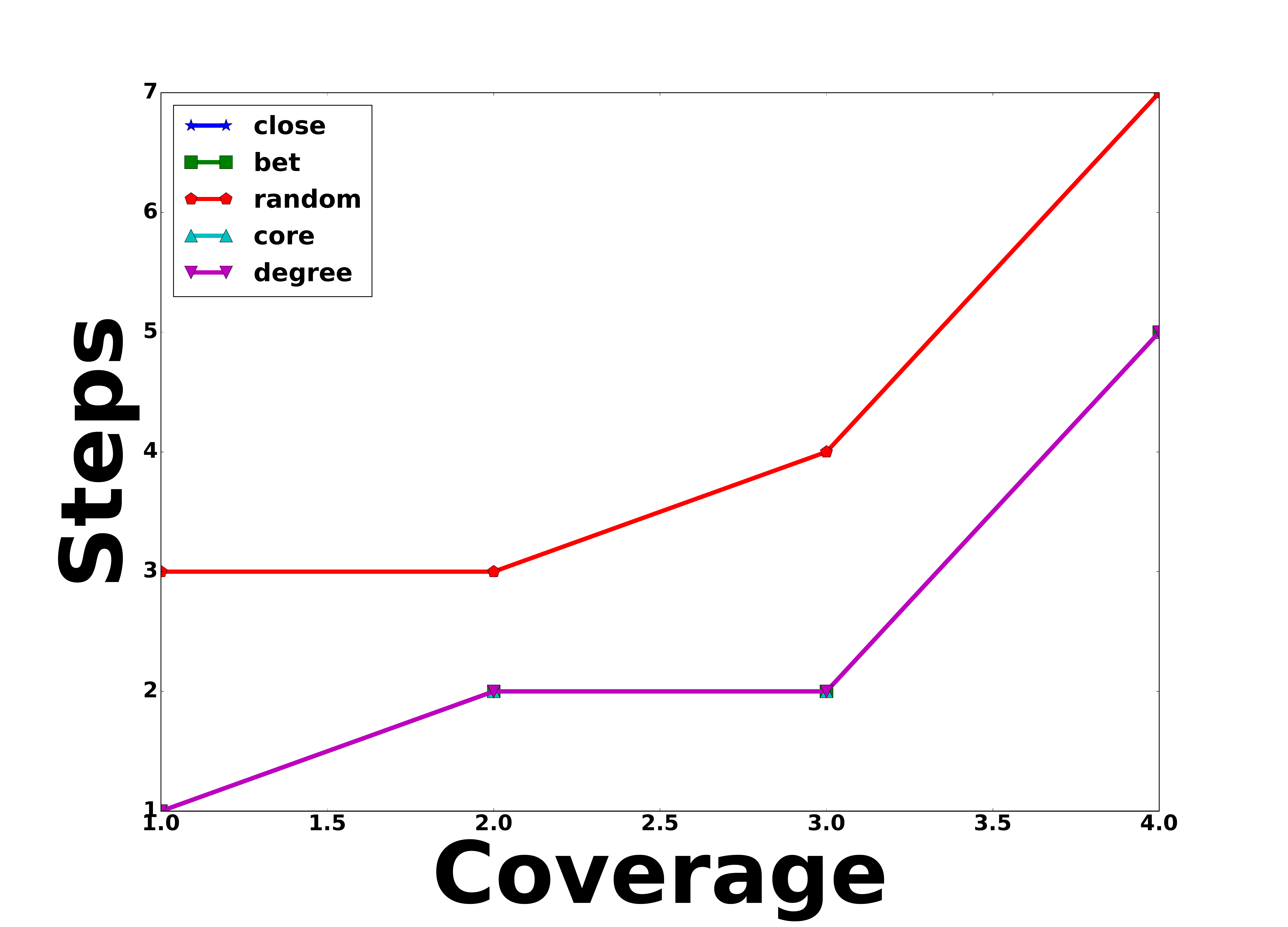}
    \caption{Caida}
  \end{subfigure}
  \begin{subfigure}[b]{0.15\textwidth}
    \includegraphics[width=\textwidth]{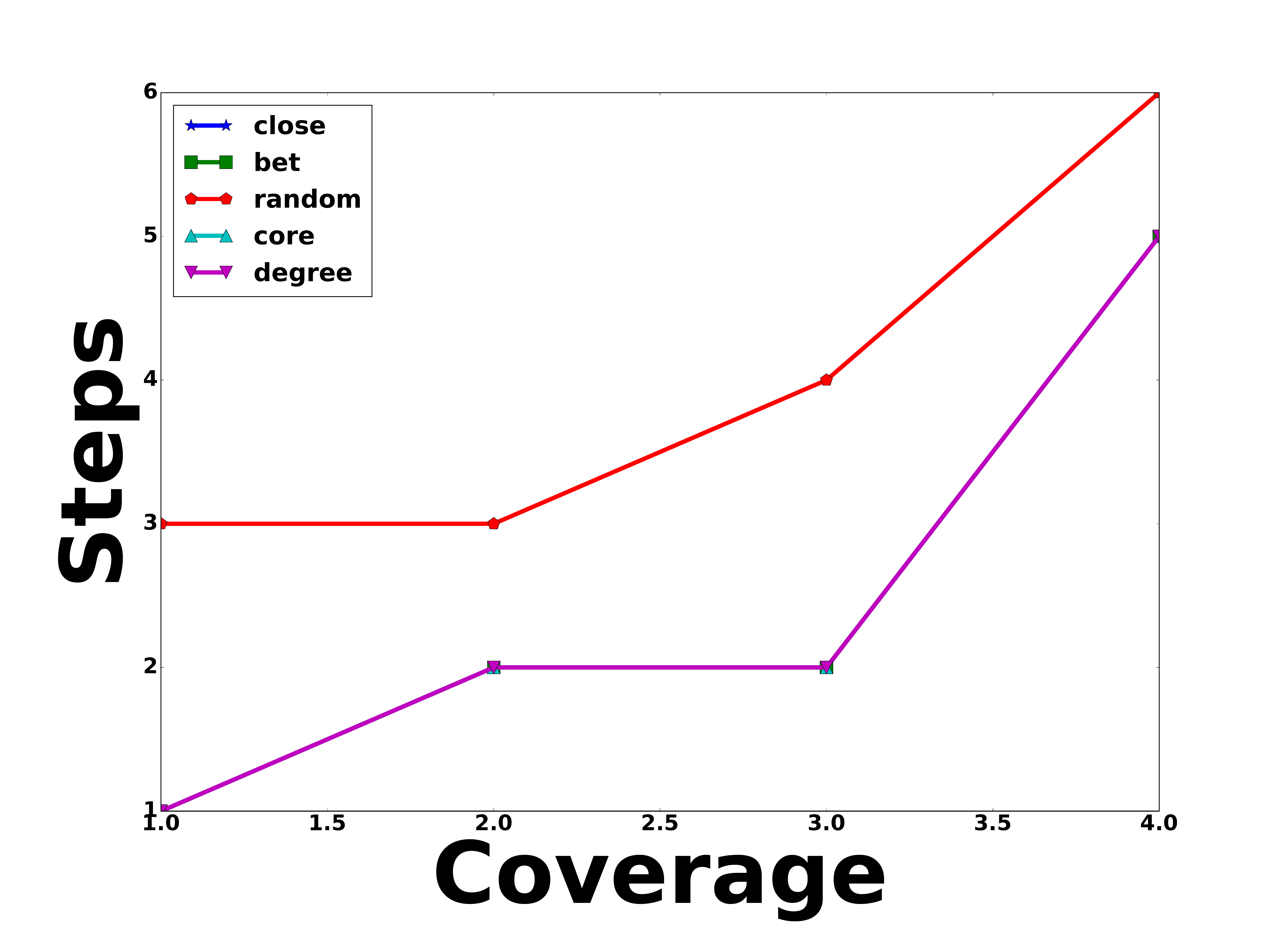}
    \caption{Bible}
  \end{subfigure}
  \begin{subfigure}[b]{0.15\textwidth}
    \includegraphics[width=\textwidth]{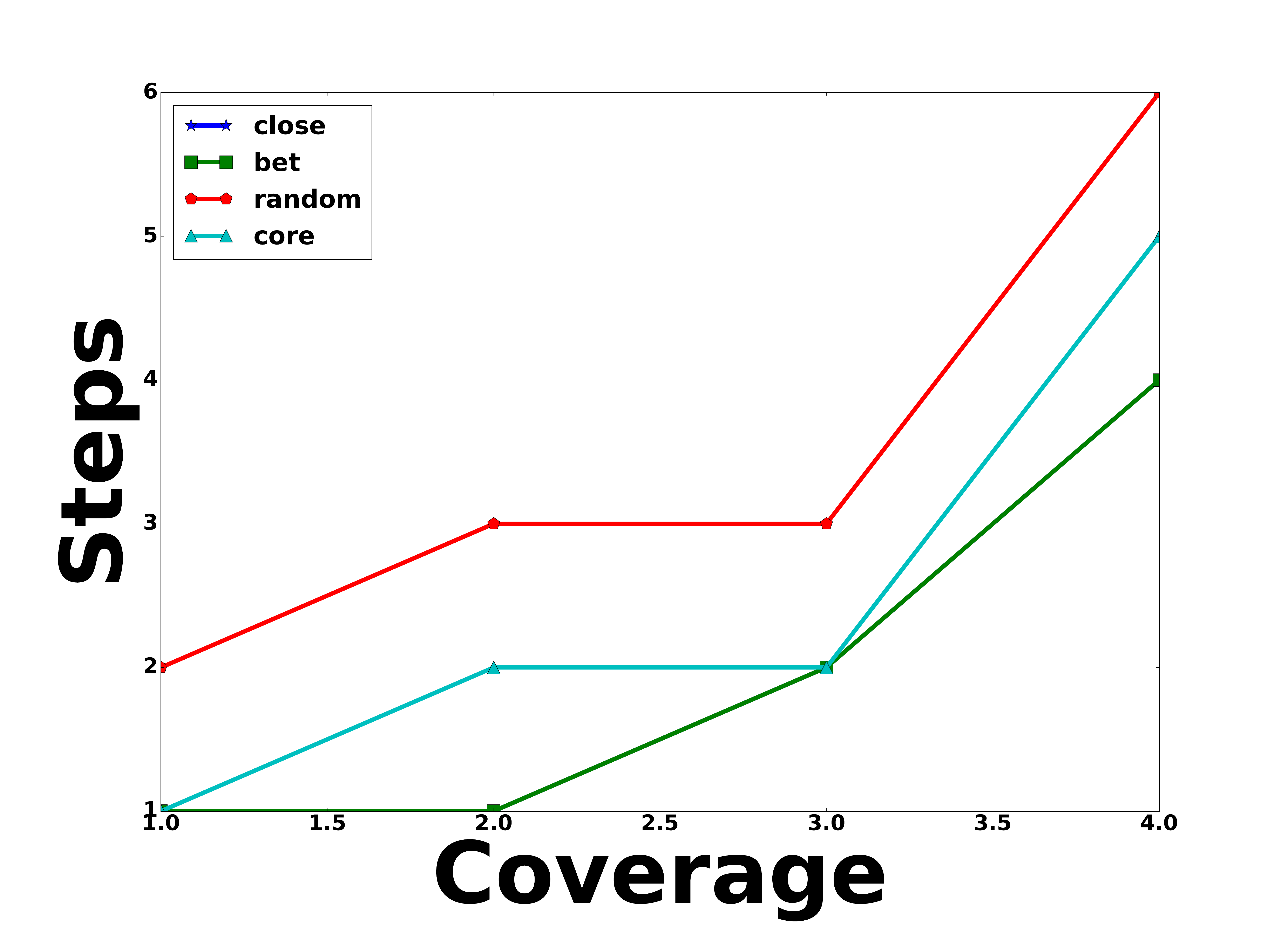}
    \caption{Software}
  \end{subfigure}
  \begin{subfigure}[b]{0.15\textwidth}
    \includegraphics[width=\textwidth]{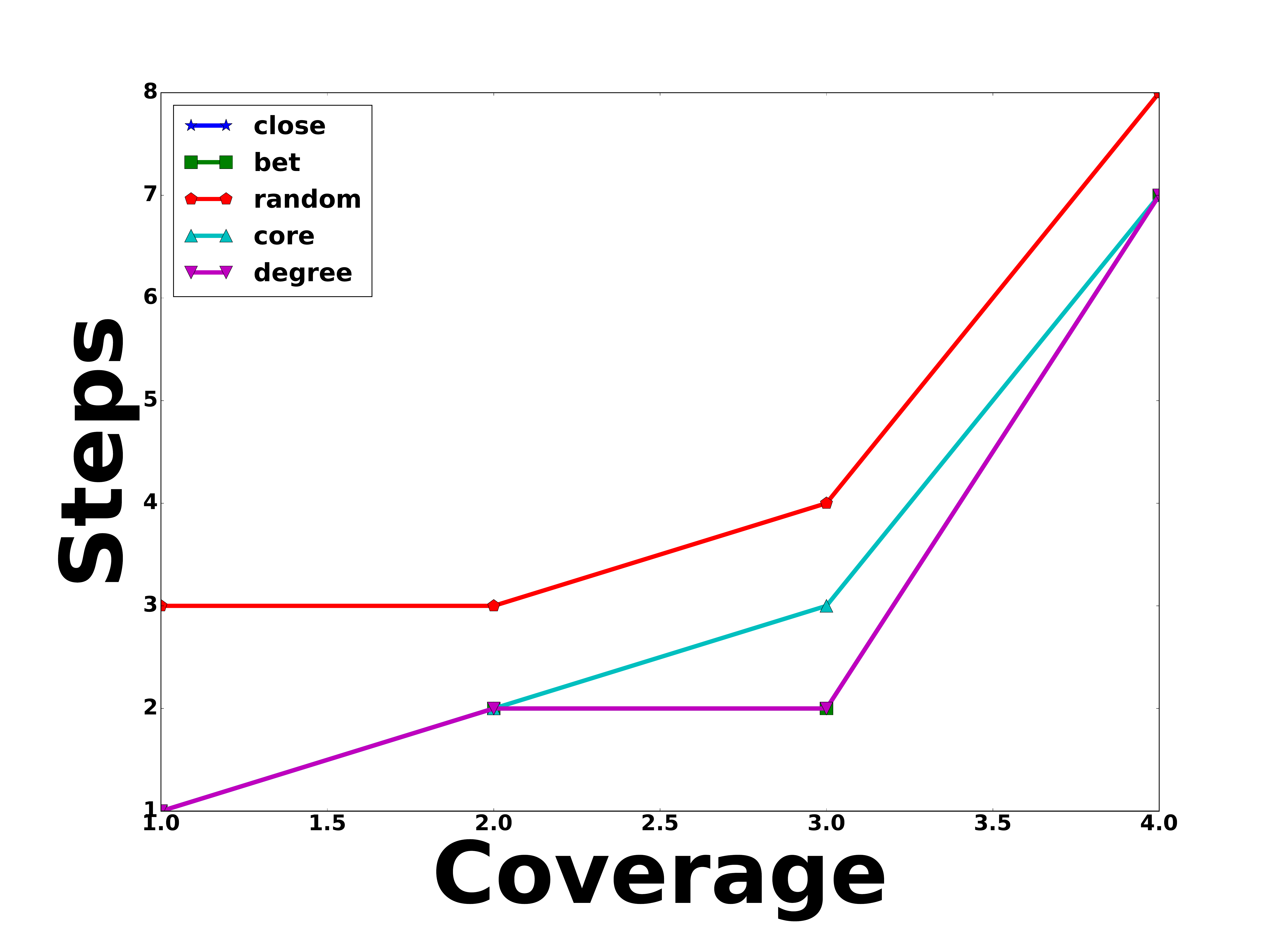}
    \caption{N1}
  \end{subfigure}
  \begin{subfigure}[b]{0.15\textwidth}
    \includegraphics[width=\textwidth]{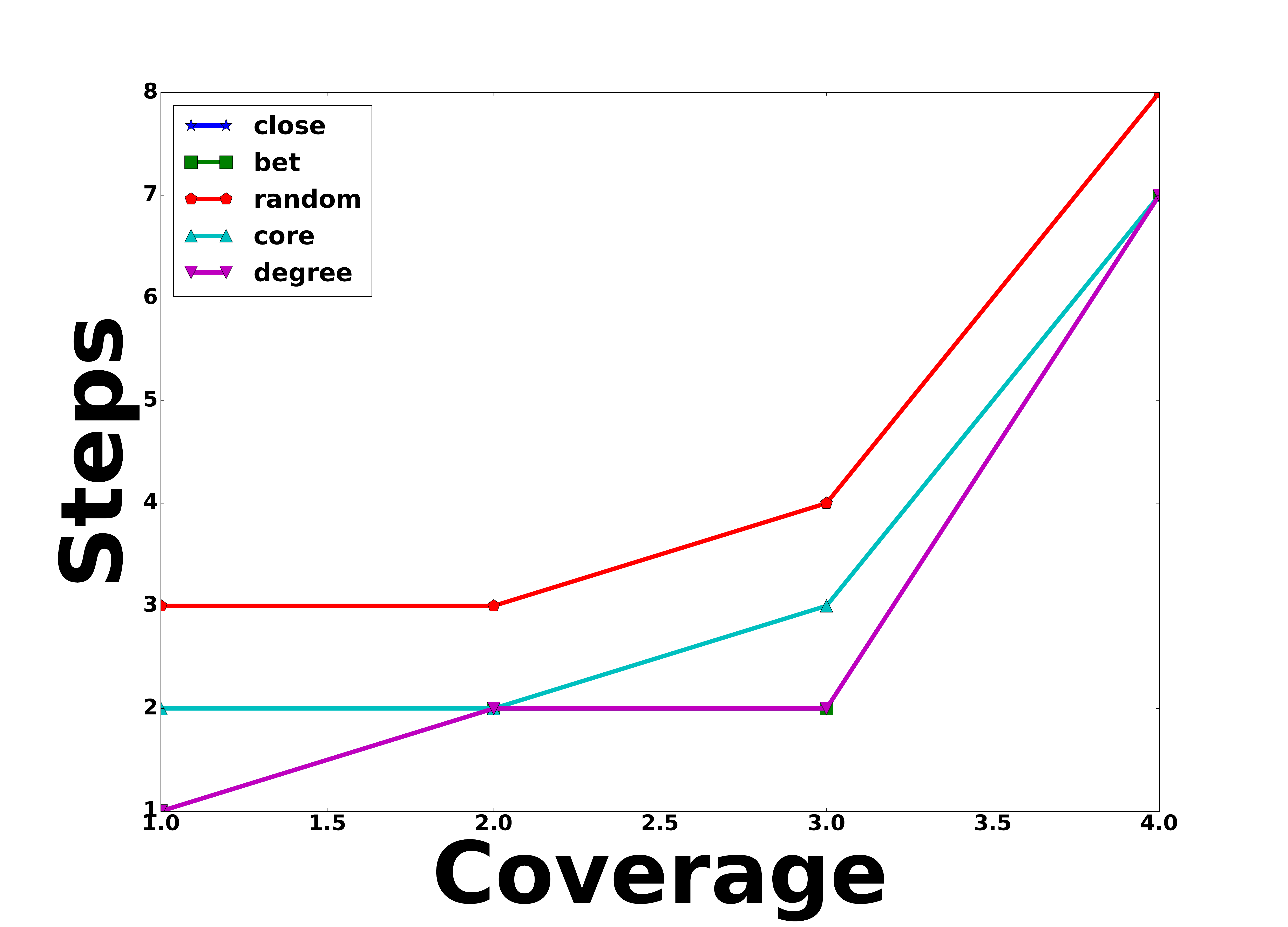}
    \caption{N3}
  \end{subfigure}

  \begin{subfigure}[b]{0.15\textwidth}
    \includegraphics[width=\textwidth]{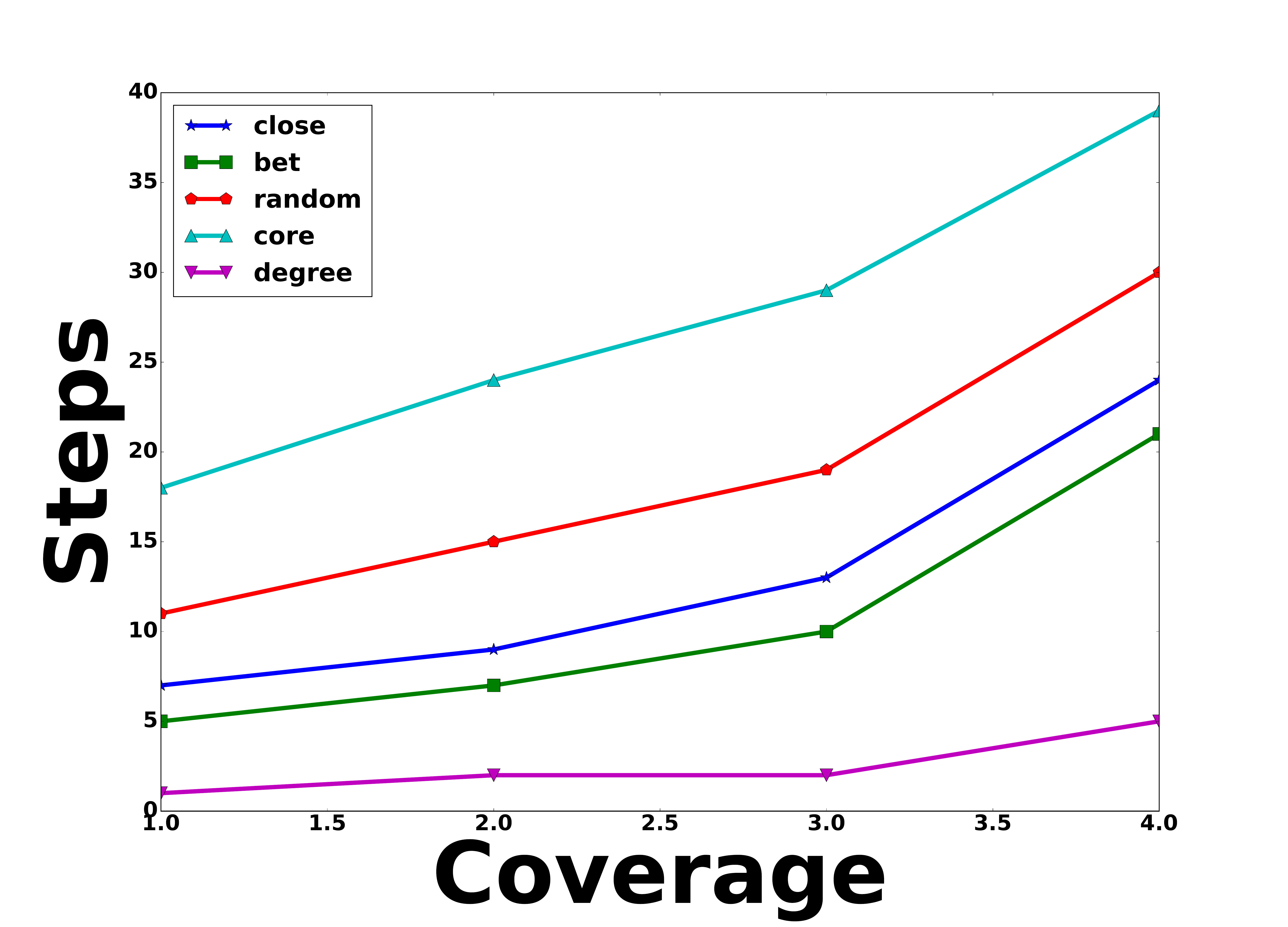}
    \caption{Power}
  \end{subfigure}
  \begin{subfigure}[b]{0.15\textwidth}
    \includegraphics[width=\textwidth]{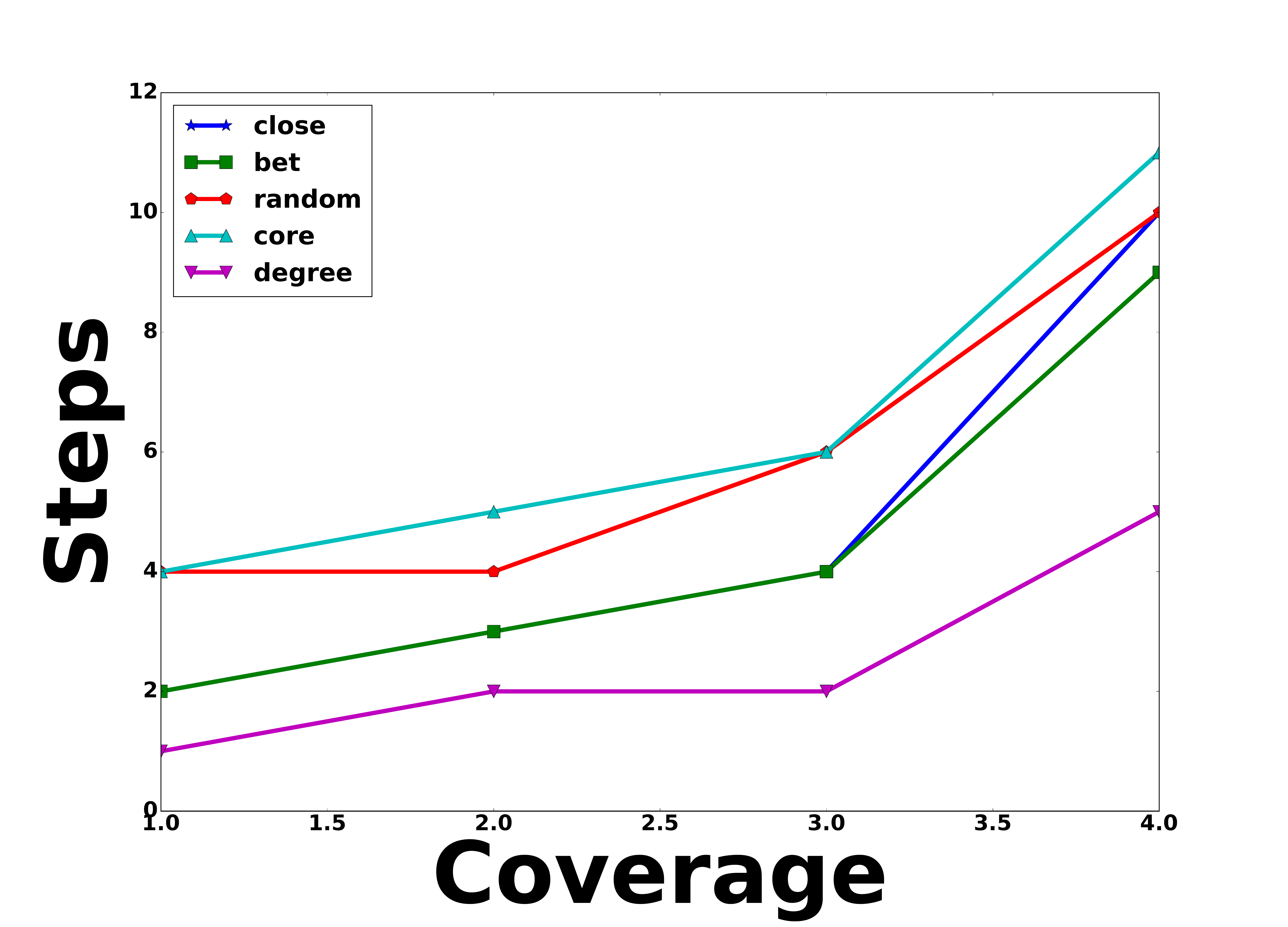}
    \caption{Protein}
  \end{subfigure}
  \begin{subfigure}[b]{0.15\textwidth}
    \includegraphics[width=\textwidth]{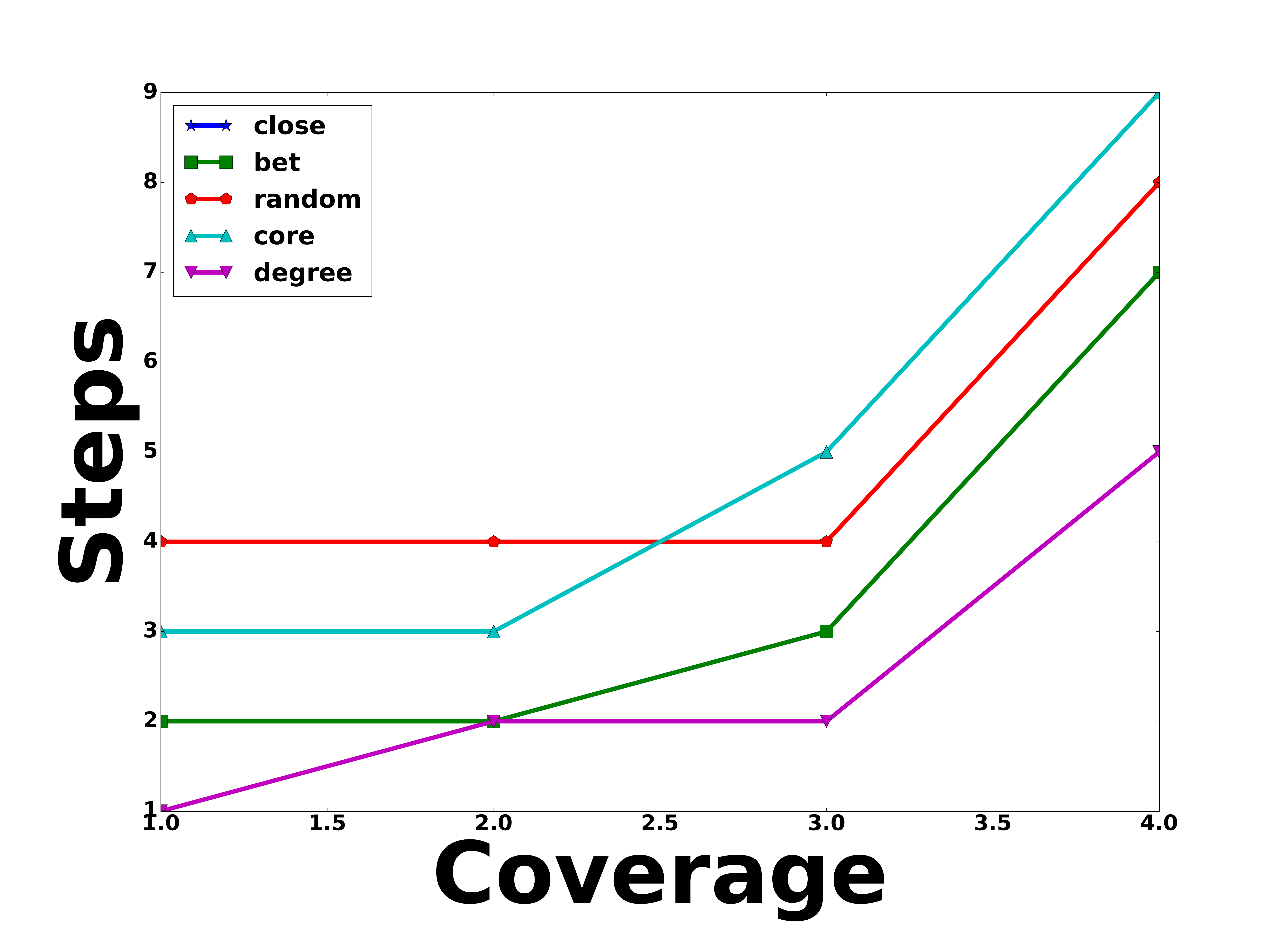}
    \caption{Hepth}
  \end{subfigure}
  \begin{subfigure}[b]{0.15\textwidth}
    \includegraphics[width=\textwidth]{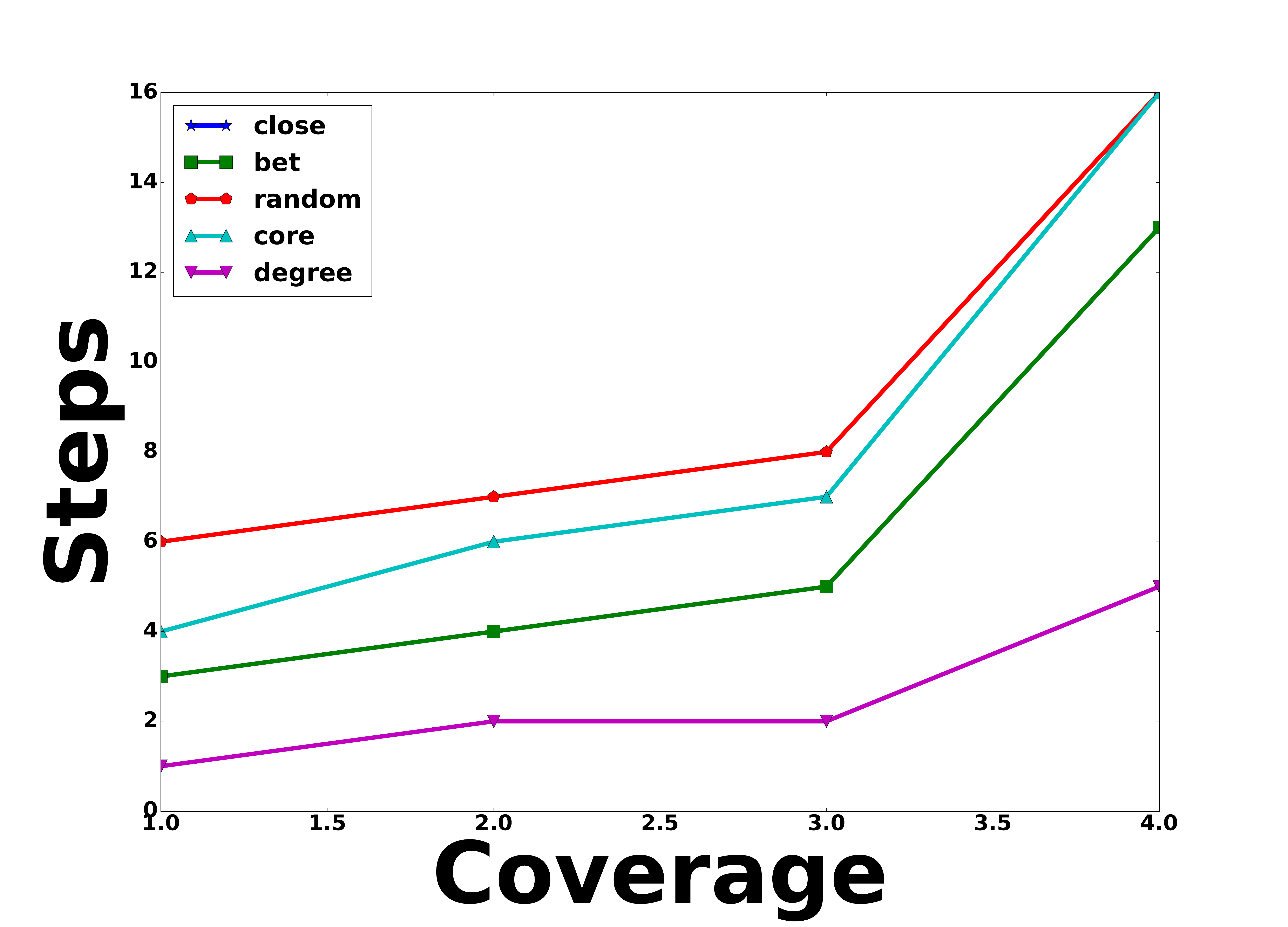}
    \caption{Facebook}
  \end{subfigure}
  \begin{subfigure}[b]{0.15\textwidth}
    \includegraphics[width=\textwidth]{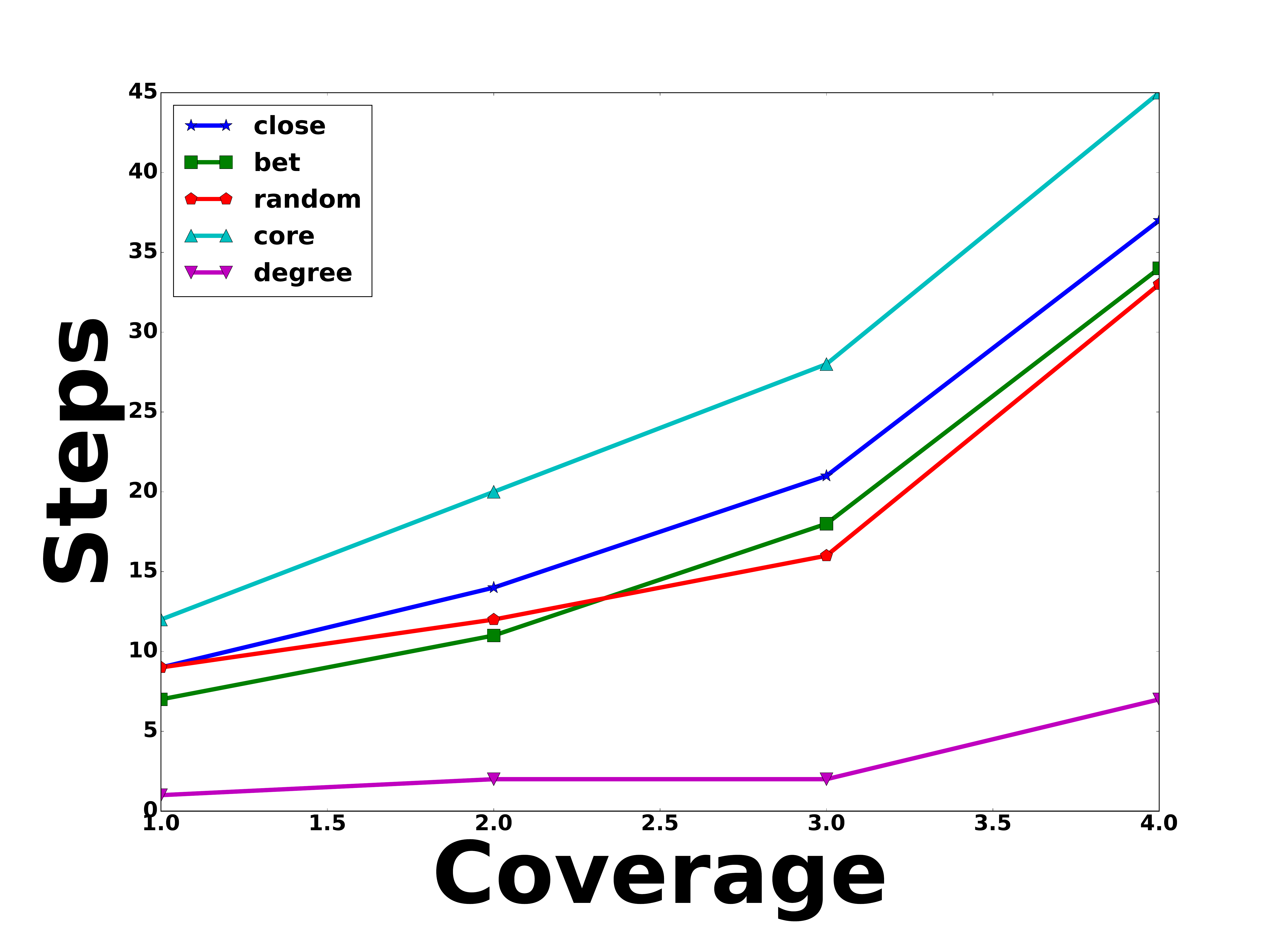}
    \caption{N8}
  \end{subfigure}
  \begin{subfigure}[b]{0.15\textwidth}
    \includegraphics[width=\textwidth]{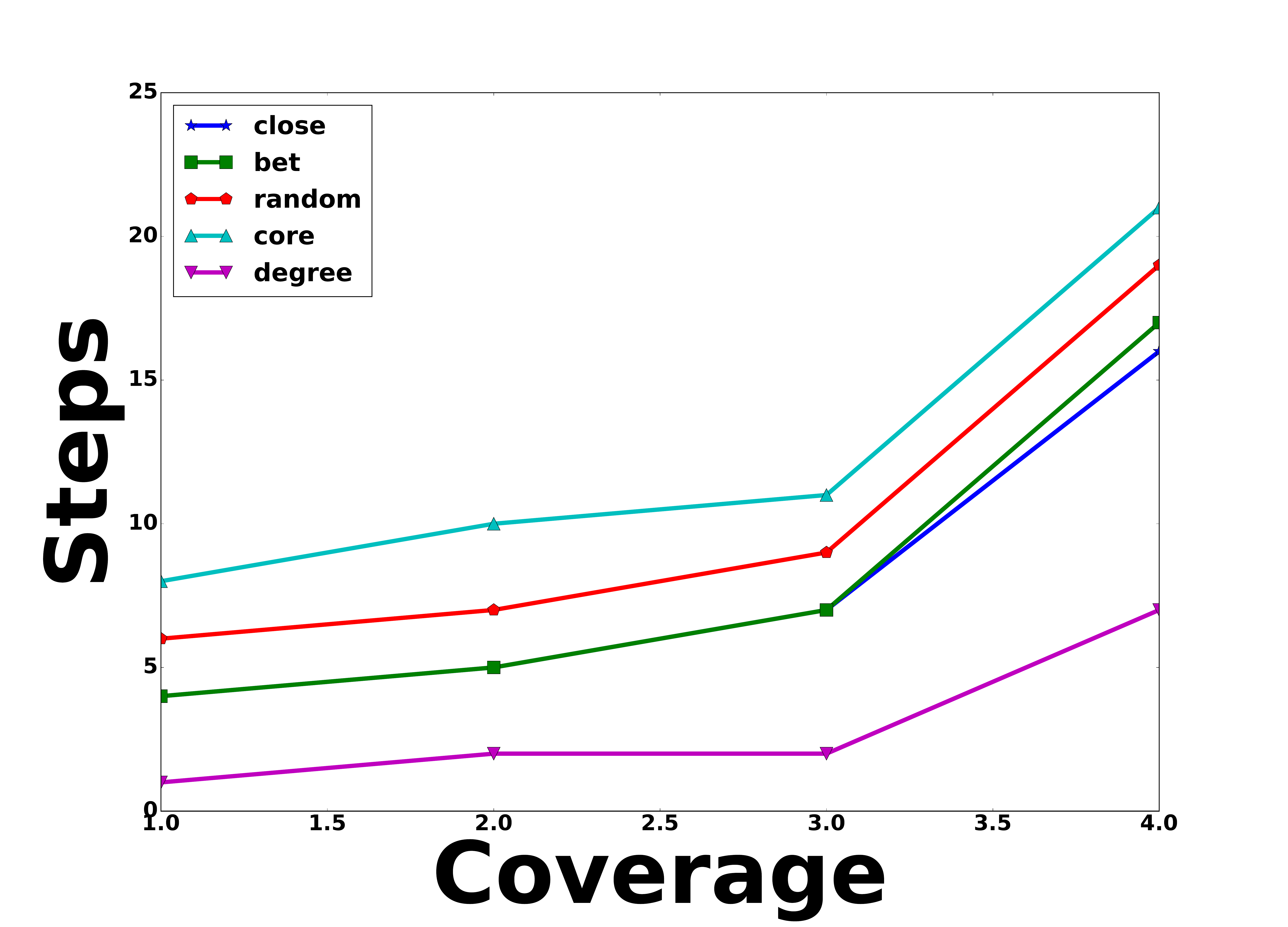}
    \caption{N9}
  \end{subfigure}
  \caption{\label{fig:spread}Time required in terms of the number of steps for the information to disseminate to $\frac{n}{4},\frac{n}{2},\frac{3n}{4},n$ nodes in the network. In the top panel, for the networks that demonstrate the presence of RCC, coreness based seed nodes consistently appear to be good choices as message initiators. In the bottom panel, for the networks that do not demonstrate the presence of RCC, coreness based initiators perform equal to or worse than random initiators.} 
    
\end{figure*}
\subsection{Robustness}
One of the desirable properties of a network is whether it can retain the ranking of its top-$k$ high centrality nodes under perturbation of the network. We hypothesize that networks which demonstrate the presence of RCC are robust to minor perturbations in the form of random deletion of edges.

To corroborate this we compute the closeness and the betweenness centrality based rankings of the nodes in the original network and compare the top-$k$ ranked nodes with that of the perturbed network. Perturbation is done by randomly deleting 1\% to 8\% of the edges from the original network. For comparison we use the standard Kendall $\tau$ measure as demonstrated in Laishram et. al.~\cite{laishram2018measuring}. As discussed in this work we also use $k=50$; however, experiments with $k=10, 20$ etc., also yield similar results. Results obtained for both categories of networks, i.e., those in which an RCC is present and those in which it is not are illustrated in Fig~\ref{fig:robust}.


Our results show that for minor perturbations of the topology, networks with RCC are robust in terms of preserving the top-$k$ high centrality nodes. However, if the perturbation intensity is too high the ranking gets jeopardized. In case of networks without an RCC even a small perturbation substantially disturbs the ranking of the top-$k$ high centrality nodes. 

\begin{figure*}[t!]
    \centering
    \begin{subfigure}[t]{0.25\textwidth}
        \centering
       \includegraphics[width=\textwidth, height=3.5cm]{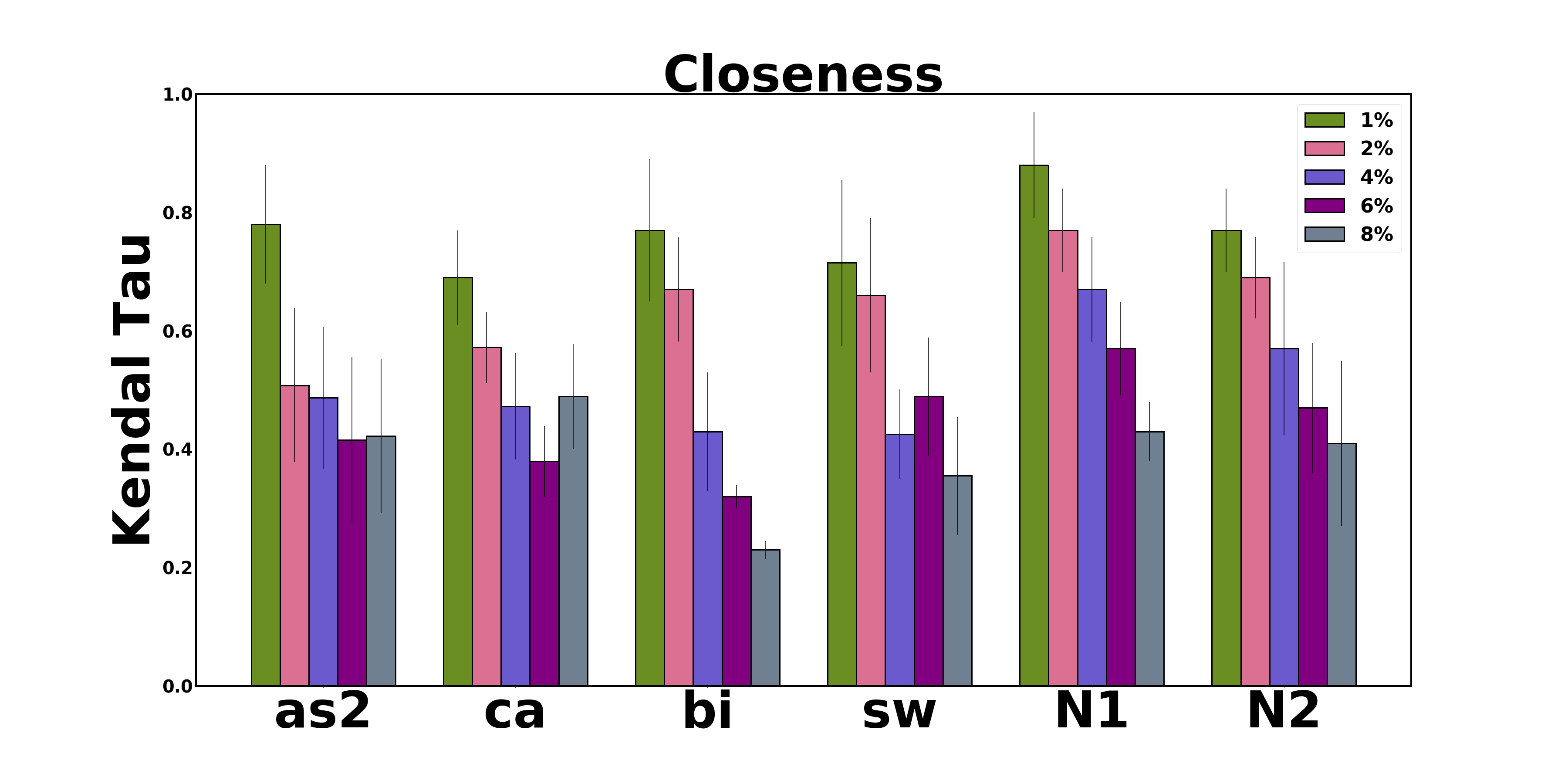}
        \caption{}
    \end{subfigure}%
    ~
    \begin{subfigure}[t]{0.25\textwidth}
        \centering
     \includegraphics[width=\textwidth, height=3.5cm]{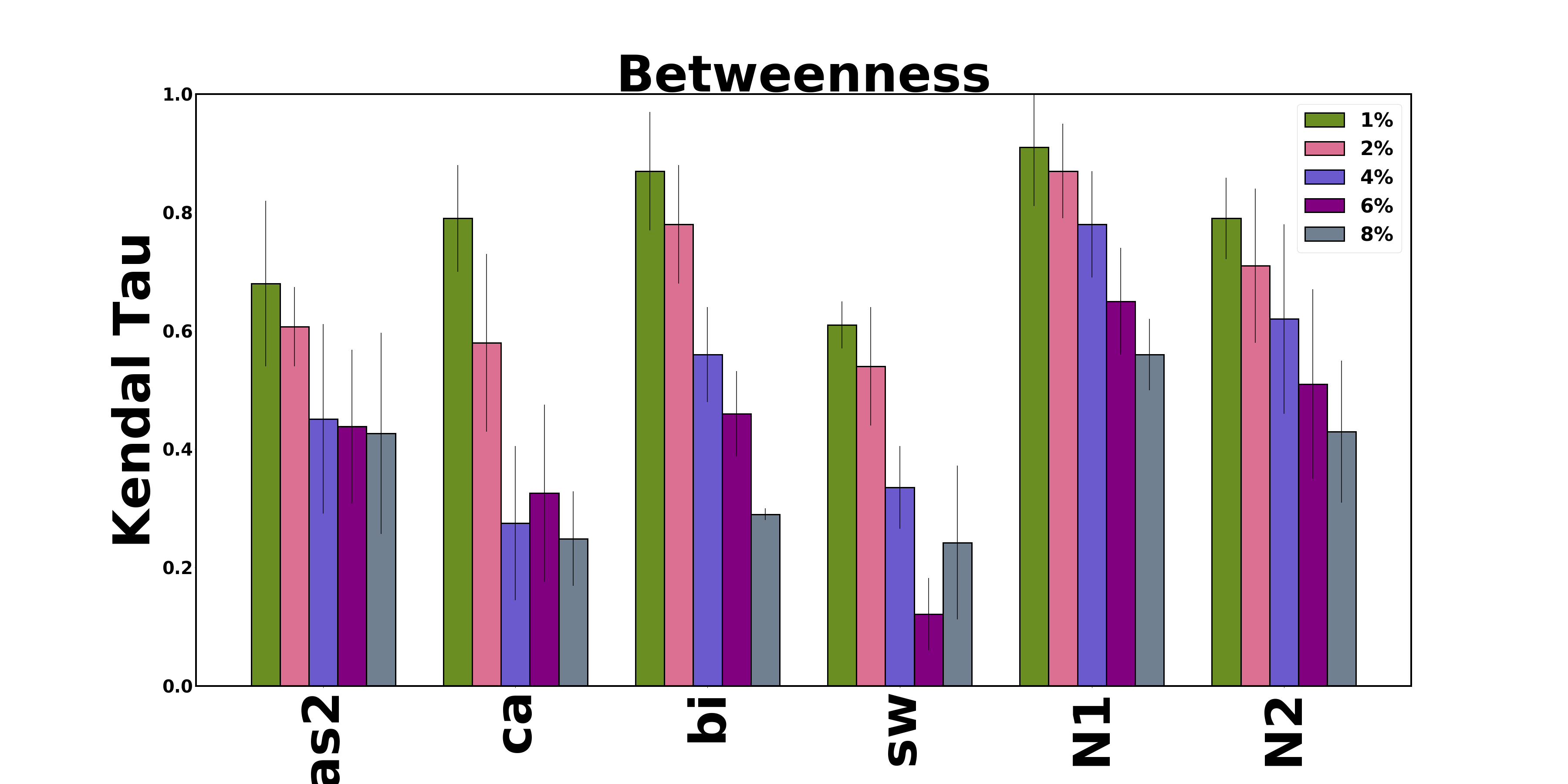}
        \caption{}
    \end{subfigure}%
    \begin{subfigure}[t]{0.25\textwidth}
        \centering
   \includegraphics[width=\textwidth, height=3.5cm]{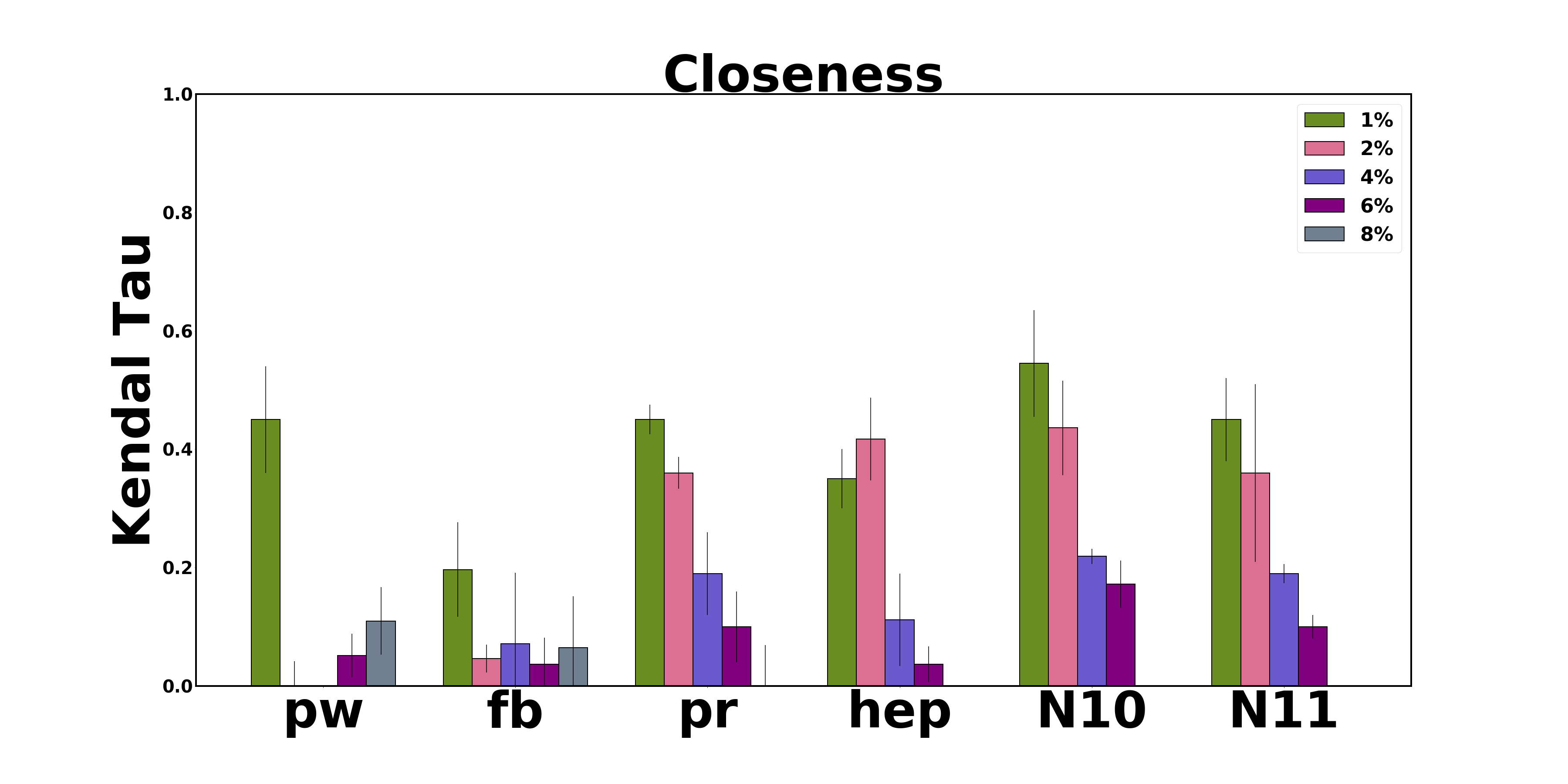}
        \caption{}
    \end{subfigure}%
    \begin{subfigure}[t]{0.25\textwidth}
        \centering
        \includegraphics[width=\textwidth, height=3.5cm]{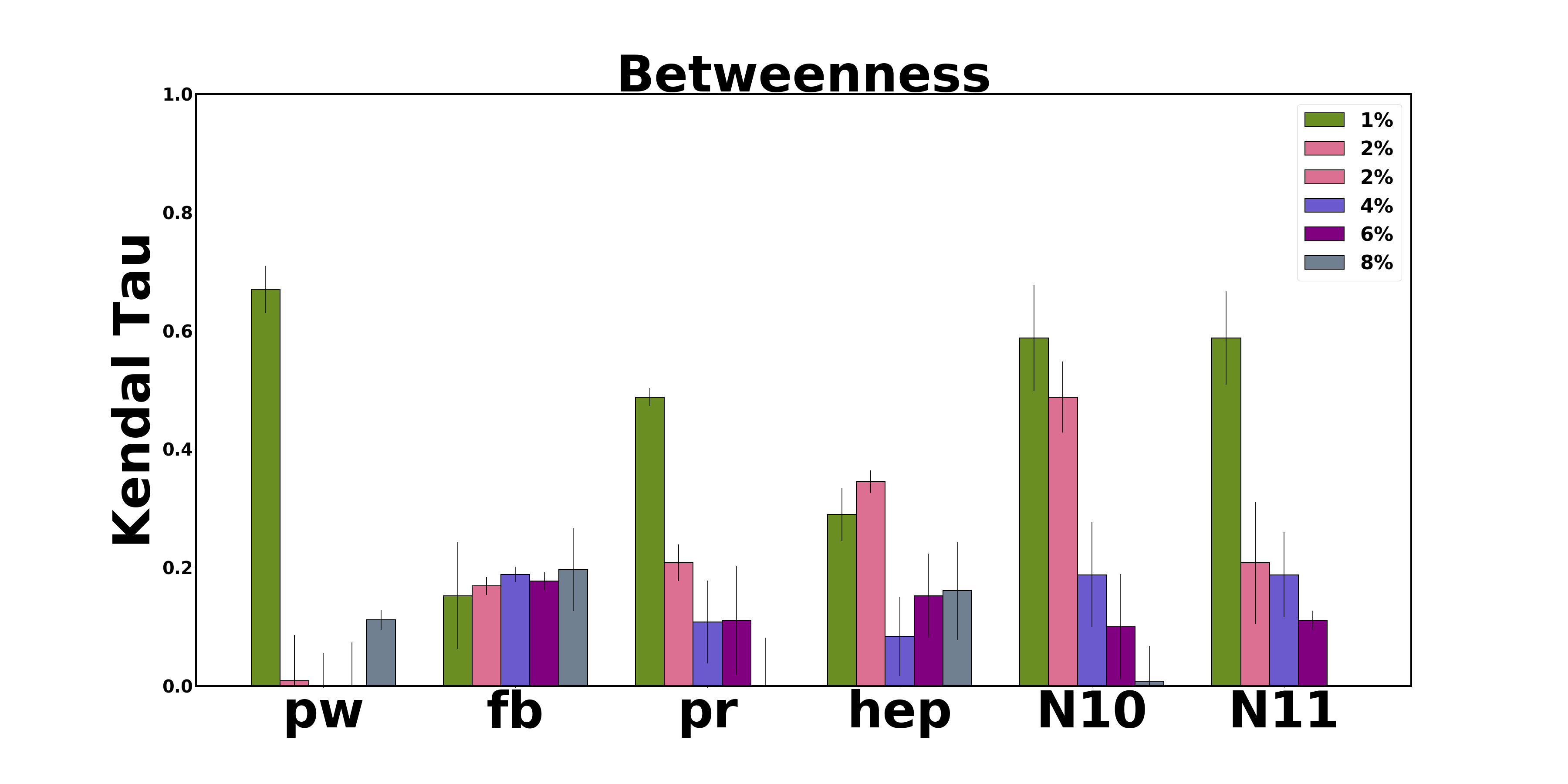}
        \caption{}
    \end{subfigure}%
    \caption{\label{fig:robust} Robustness results in terms of Kendal $\tau$ score comapring the rankings of the top-50 high closeness and betweenness nodes of the original and the perturbed version of the network. Perturbation is done by randomly deleting 1\% to 8\% of the edges. We average the results over 10 different runs; the error bars therefore are also reported. Results obtained in the case of networks with an RCC (5a, 5b) is compared against networks without an RCC (5c, 5d). The results clearly indicate that networks with RCC are robust to minor perturbations.}    
  \end{figure*}
\begin{figure}
  \centering
  \begin{minipage}[b]{\textwidth}
  	\includegraphics[width=0.25\textwidth, height=0.2\textheight]{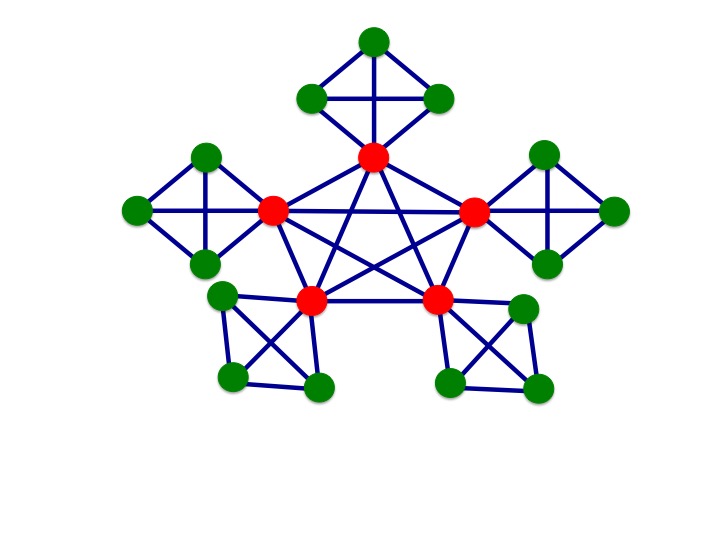}
     \includegraphics[width=0.25\textwidth, height=0.2\textheight]{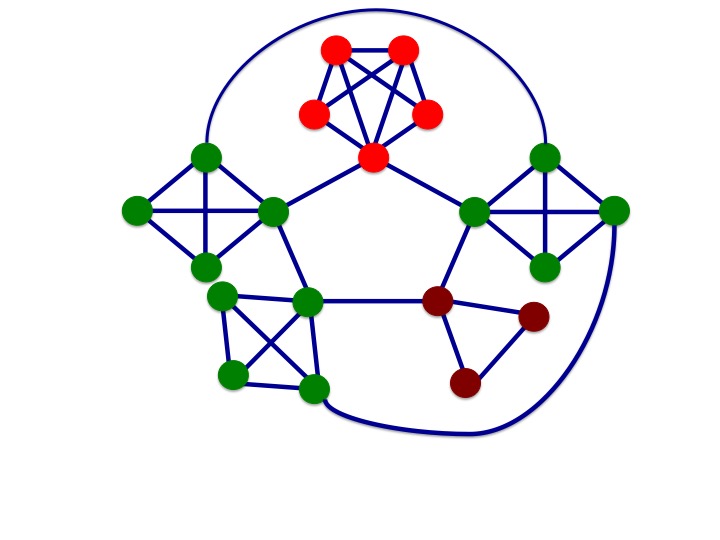}
   \end{minipage}
   \caption{\label{models} Simplified models of a network with (left) and without (right) an RCC. Red vertices have core number 4, green vertices have core number 3 and brown vertices have core number 2.  Note that the RCC is formed in the innermost core.}
\end{figure}

\section{Algorithm for Forming RCC}\label{sec:part3}
We now present a simple yet effective modification algorithm for inserting RCC into a network and conversely removing RCC from a  network containing it.


\noindent{\bf Rationale for the algorithm}: To explain the rationale for our algorithm, we present two simplified models of a network with a RCC and without a RCC (Figure~\ref{models}). If the network contains RCC, then the inner shells are expander like, and communities meet at the RCC. An example model conforming to this structure would be a large clique in the center surrounded by smaller cliques. 

In a network without a RCC, the majority of the communities do not meet through the inner core. This indicates that the inner core is not at any special position with respect to the paths connecting the communities. One example model of such a network would be a ring of cliques of different sizes. The smaller cliques can have connections between them. Here the highest core is at the side of the network rather than the center\footnote{We emphasize that these models are only idealized representations of the two types of networks, and more complicated connections occur for real-world networks. Nevertheless the principal idea is maintained, i.e., for networks with a RCC, the innermost core is at the center of the network.}. 
%
%

As per the example figure, to introduce a RCC, we can simply connect the high degree vertices across communities. The high degree of the vertices ensures that the clique (or near clique) formed by them will have higher core numbers. Joining communities ensures that the communities connect within this subgraph. In the network without an RCC in Figure~\ref{models}, we would connect all the vertices in the ring. 

Conversely, to destroy the RCC property of the network, we will simply delete the edges in the inner core, such that the connections between the communities are destroyed, and the highest numbered core moves away from the center.

\noindent{\bf Algorithm for forming RCC}: Our proposed approach for connecting the communities via  high degree vertices is however very expensive. This is because finding communities itself is a computationally intensive operation. A faster alternative is to simply connect (or disconnect) connections between the high degree vertices. This method works, because in networks without a RCC, high degree vertices within the same community are likely to be already connected. Therefore, any vertex pair connected as part of the modification algorithm will be in different communities. 

 Moreover, increasing the connections among the high degree nodes also brings all those (usually low degree) nodes that are neighbors of these high degree nodes closer in the network. Thus, the nodes in the innermost cores will have high centrality, as is a characteristic of networks with a RCC. On the converse side, for a network with a RCC, the high degree vertices will be in the inner core, so removing edges between them disconnects the cores.

It might seem from our approach that rich clubs of high degree vertices are also the rich centrality clubs. Figure~\ref{models} shows a counter example. The right hand graph has a rich club of high degree vertices but not a dense subgraph of high centrality vertices.

\noindent{\bf Experiments.} The pseudocode of the algorithm is given in Algorithm~\ref{algo:expansion}. Figure~\ref{fig:model}, plots the eigengaps of the networks before (blue lines) and after (green lines) the modification. To clearly compare between the original and modified networks, we plot the eigengaps for each shell, rather than over an aggregate of shells as done in Figure~\ref{fig:shell_fig}(c).
Note that for the networks that demonstrate the presence of RCC (AS, Bible and Software), the eigengap of the modified network is smaller, i.e., the green line is lower and has a steeper slope than the original network. For the networks that do not demonstrate the presence of RCC (Power, Protein and Facebook), the green line is higher, showing that the value of the eigengap increased and has a more gradual slope. We report the statistics of the modified network in Table~\ref{tab:modified}. The table clearly shows that our model also preserves the crucial structural properties of the original network, for e.g., the scale-free exponent $\alpha$ and the average degree.\footnote{We set the model parameter $h=30$. The results are similar for $h=20$ and $h=15$. $\gamma$ is set to 0.2} 




\begin{table}\small
\begin{tabular}{|c|c|c|c|c|c|c|c|} 
 \hline
 N/W & |V| & |E| & $ \alpha$ & $\mu(d_v)$ & $\mu(C_lC)$ & $\mu(BC)$ & \textit{LCN}\\  
 \hline
 As & 6474 & 12439 & 1.245 & 4.07 & 0.26 & 0.0012 & 9   \\ 
 \hline
 \hline
 Bible & 1773 & 8600 & 1.557 & 10.07 & 0.298 & 0.006  & 11   \\ 
 \hline
 Software & 1003 & 4400 & 1.236 & 8.85 & 0.339 & 0.004 & 9   \\ 
 \hline
Protein & 1870 & 2052 & 1.756 & 2.89 & 0.17 & 0.016 & 7  \\ 
 \hline
 Facebook & 7178 & 10349 & 2.311 & 2.82& 0.125 & 0.0123 & 6   \\ 
 \hline
 \hline
 Power & 4941 & 6698 & 2.344 & 2.71 & 0.0715 & 0.017 & 7    \\ 
 \hline
 
 \end{tabular}
 \caption{\label{tab:modified} Network statistics for the modified graphs. Note that the parameter values are comparable to that of the original networks in Table~\ref{dataset}.}
\end{table}

\vspace{-4mm}
\begin{algorithm}
\SetKwInOut{Parameter}{Parameters}
\KwIn{$G (V,E)$}
\KwOut {$E$}
\Parameter {$h,\gamma, flag$}
\CommentSty{\color{blue}}
Sort vertices in G based on decreasing degree\;
Select the top $h$ nodes based on degree\;
\If {$flag==1$}  {find $E_1$ possible edges that could be formed among the $h$ nodes}
\Else {find $E_1$ actual edges that are present among the $h$ nodes} 
Select $E_f$ edges randomly from $E_1$ where $|E_f| = \gamma * |E_1|$\;  
\lIf{flag == 1}{ $E \leftarrow E \cup E_f $ }
\lElse{$E \leftarrow E - E_f $}
\Return{$E$}\; 
\caption{\label{algo:expansion} Algorithm for increasing $(flag==1)$/decreasing ($flag==0$) expansion property.}
\end{algorithm}

\begin{figure*}[t!]
\centering
  \begin{tabular}{c c c}
   \includegraphics[width=0.3\textwidth,height = 2.8cm]{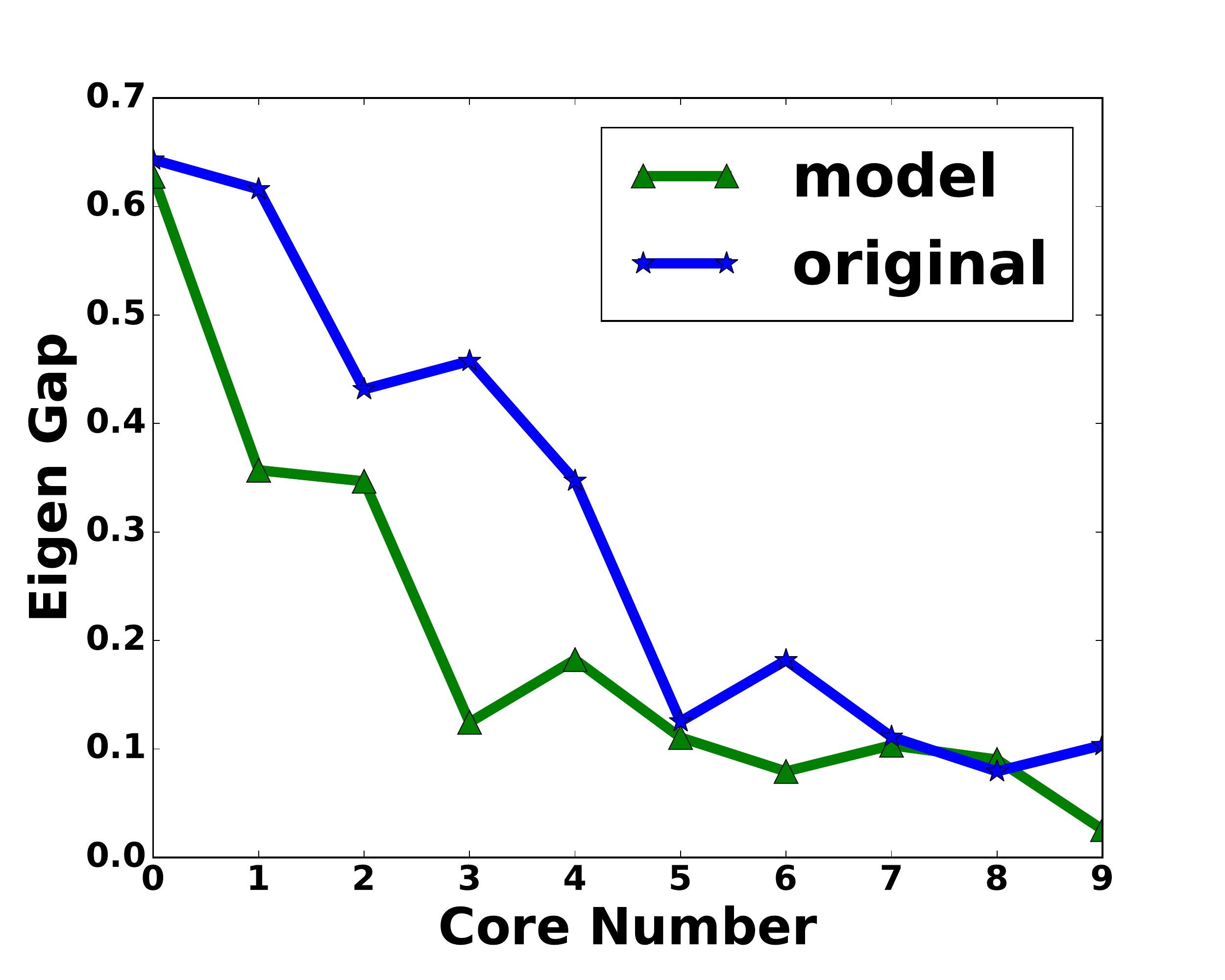}  &  \includegraphics[width=0.3\textwidth,height = 2.8cm]{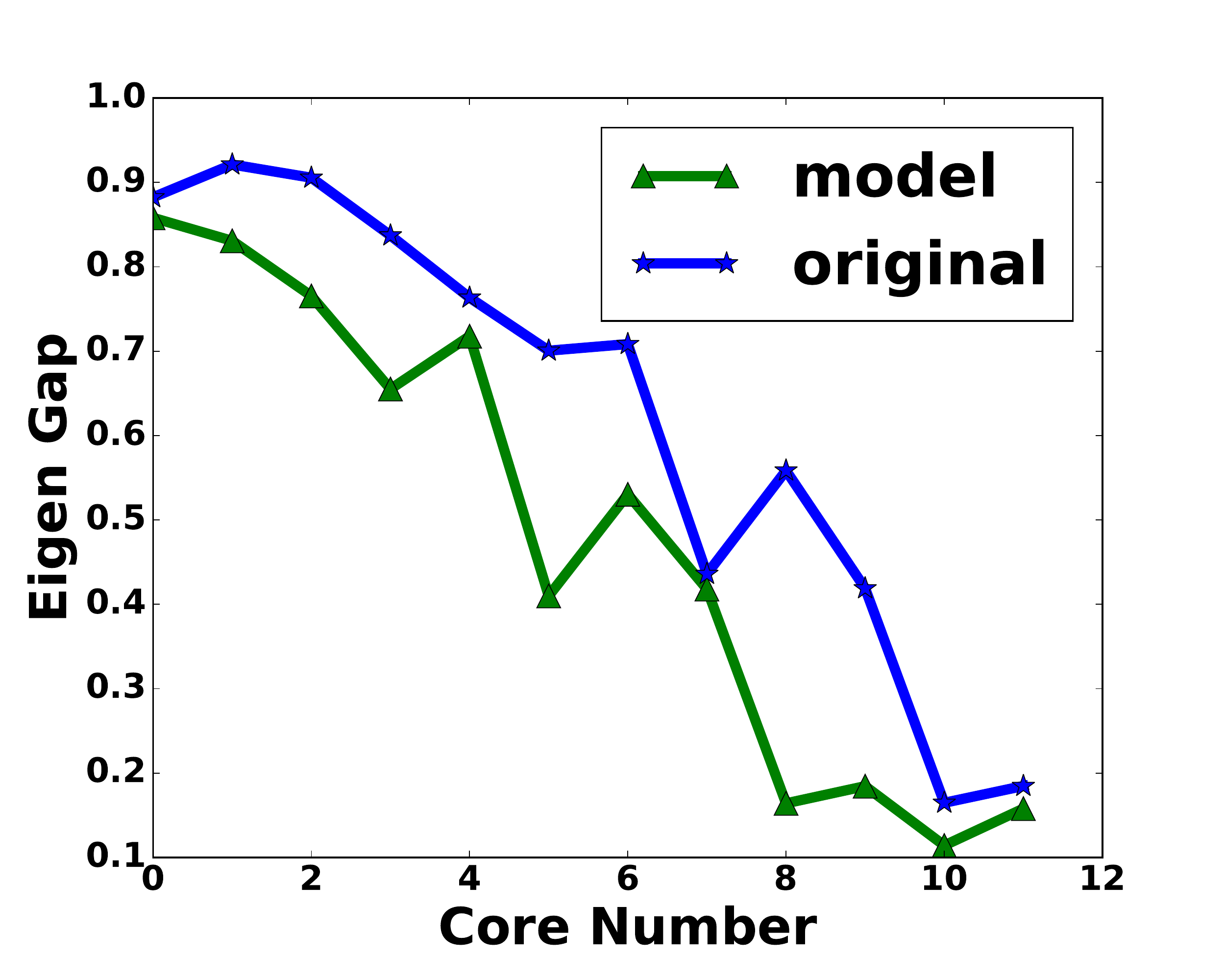} & \includegraphics[width=0.3\textwidth,height = 2.8cm]{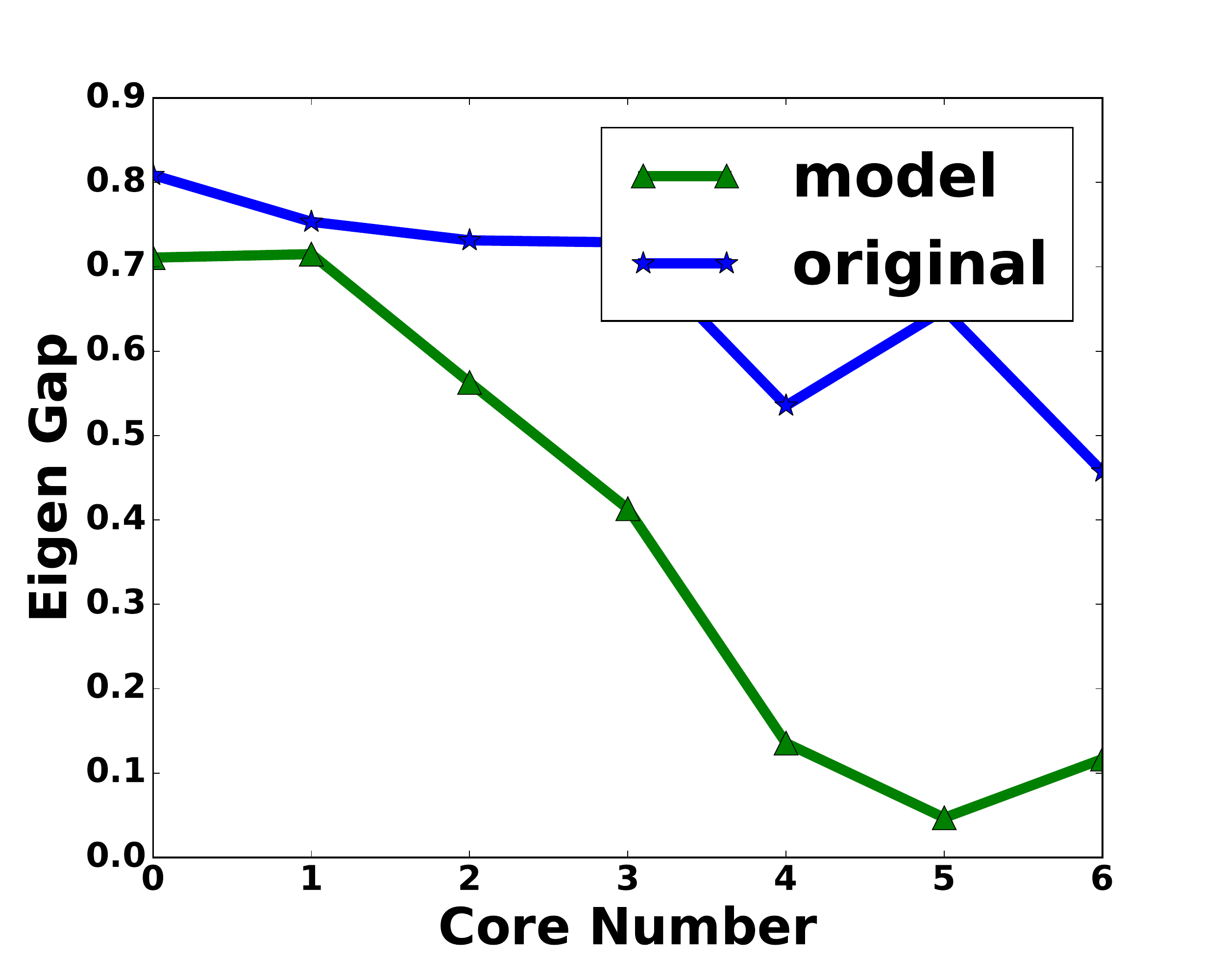} \\ 
      \includegraphics[width=0.3\textwidth,height = 2.8cm]{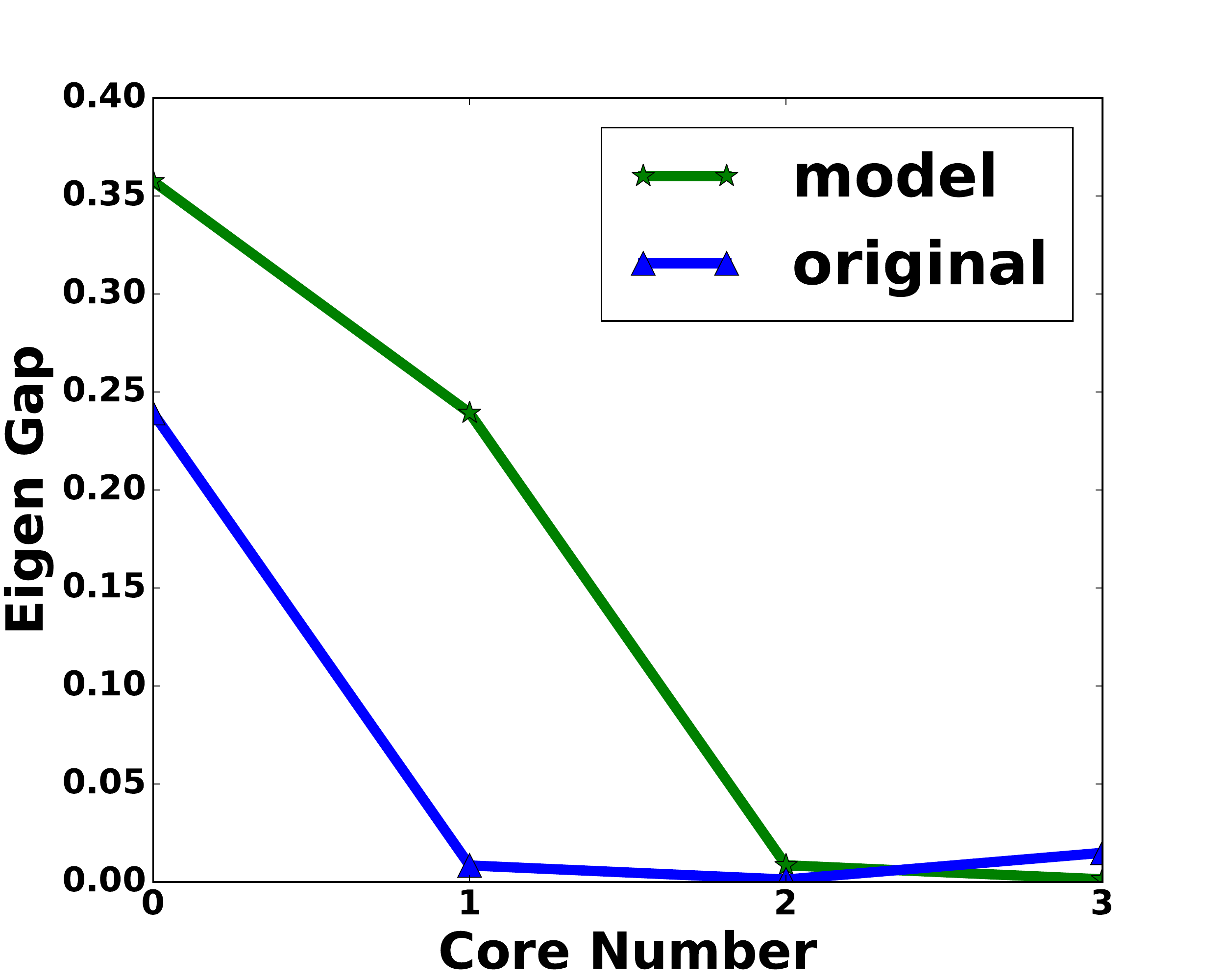} &  \includegraphics[width=0.3\textwidth,height =  2.8cm]{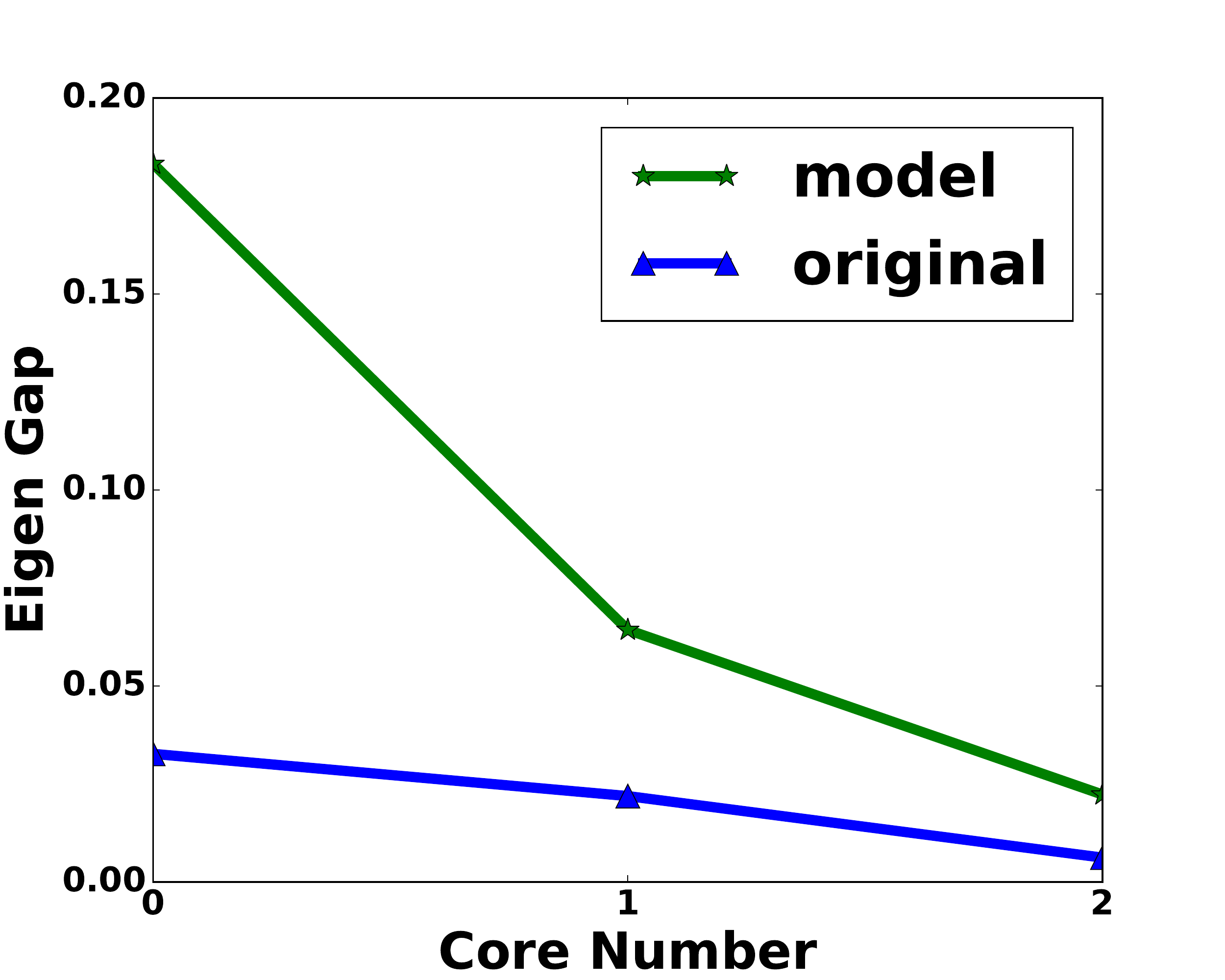} &  \includegraphics[width=0.3\textwidth,height = 2.8cm]{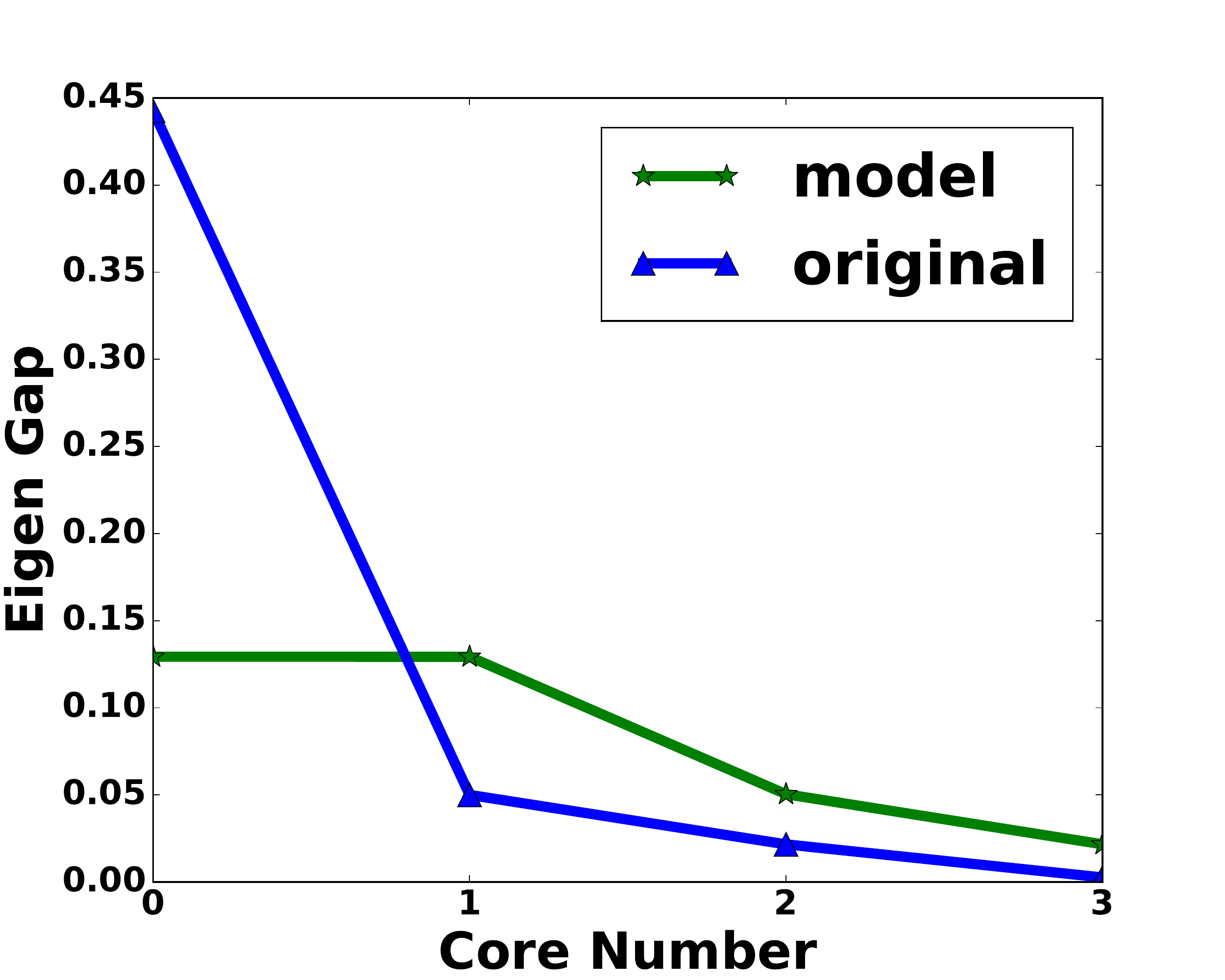} 
  \end{tabular}
  \caption{\label{fig:model} The outcome of the modification model. The first three networks (top panel) that originally demonstrate presence of RCC, i.e., AS, Bible and Software get transformed to networks with no RCC. The last three networks (bottom panel) that originally do not demonstrate the presence of RCC, i.e., Power, Protein and Facebook get converted to networks with RCC. These plots are similar to the eigen gap chart of Figure~\ref{fig:shell_fig}(c), except we show the eigengap over all the shells rather than in groups. The \textcolor{blue}{blue} (\textcolor{green}{green}) plot shows the eigengap  for the original (modified) network.}
  \label{correlation}
\end{figure*}

\section{Theoretical insights of definition}\label{sec:theory}
We now theoretically demonstrate how the three properties described in section~\ref{sec:part2} lead to the formation of rich centrality clubs. We consider an ideal network, where the vertices of each shell form a connected component and as per property 2,  the values of $\alpha$ are large enough such that each shell is an expander graph. Since an expander graph has no bottleneck, or clear partition, random graphs fulfill these criteria. We therefore assume that each subgraph, $S_{k}$ induced by a shell $k$ and its neighbors, is an Erdos-Reyni random graph, with $n_k$ vertices and $d_k$ average degree, and the probability of connection among a pair of vertices $p_k$; thus, $d_k \approx {n_k}{p_k}$.

In~\cite{chung2002average}, the authors prove several bounds on the average path length in a random graph $G(n,p)$. In particular, they state that if $np \ge c > 1$, then the average distance is  constant times $\frac{\log{n}}{\log{np}}$. Using this result, we assume that the average distance between two vertices in subgraph $S_k$ is $\frac{\log{n_k}}{\log{n_kp_k}} \approx \frac{\log{n_k}}{\log{d_k}}$.

Now consider a path between two vertices $v_1$ and $v_n$, with the sequence of vertices $v_1, v_2, \ldots,  v_x, \ldots, v_{n-1}, v_n$. Let the core numbers of these vertices be $l_1, l_2, \ldots l_x, \ldots l_{n-1}, l_n$. Let $l_x$ be the highest numbered shell  in this sequence. We assume that for all shortest paths between $v_1$ and $v_n$, $l_1 \le l_2 \le \ldots l_x \ge \ldots l_{n-1} \ge l_n$. 

This means that the shortest paths travel monotonically from a source low shell to the highest shell required, and then 
back from the high shell to the destination low shell. Note that this assumption allows the path to remain in the same shell throughout as well. However, paths that zig-zag from a high shell to a low shell and back to a high shell are not allowed. This rationale is based on the fact that since lower valued shells are sparser, it is more likely that the paths will connect through higher shells than lower shells.

With these assumptions in place we can state the following
\begin{lemma}
If there exists a shell $k_{x}$, such that, $\frac{\log{n_{k_y}}}{\log{d_{k_y}}} > r_{yz}+ \frac{\log{n_{k_z}}}{\log{d_{k_z}}} $, for all $k_y < k_x$ and for at least one $k_z \ge k_x$, then high centrality vertices are located in core $C_{x}$. 
\end{lemma}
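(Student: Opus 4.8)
The plan is to show that the stated inequality forces the shortest paths between the numerous outer-shell vertices to funnel through the dense inner core $C_x$, and then to read off both high betweenness and high closeness of the $C_x$ vertices from that single routing fact. I would work throughout with the paper's idealized model: each $S_k$ is an Erd\H{o}s--R\'enyi graph whose typical internal distance is $\frac{\log n_{k}}{\log d_{k}}$, and shortest paths are monotone in shell number (ascend to a peak shell, then descend), as assumed just above the statement.

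First I would isolate the routing dichotomy. Fix two vertices $u,w$ in outer shells with core numbers below $k_x$. By the monotone-path assumption any geodesic between them attains a peak shell, and decomposes into an ascent, an internal traversal at the peak, and a descent. I would compare two candidate strategies: (a) keeping the peak in some outer shell $k_y < k_x$, whose internal traversal costs $\frac{\log n_{k_y}}{\log d_{k_y}}$; versus (b) climbing to the inner shell $k_z \ge k_x$ supplied by the hypothesis, at ascent cost $r_{yz}$, then traversing there at the much smaller cost $\frac{\log n_{k_z}}{\log d_{k_z}}$. The hypothesis says precisely that for \emph{every} outer shell $k_y < k_x$ the strategy-(a) cost exceeds the strategy-(b) cost, so routing through the denser shell is strictly shorter. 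Since a vertex in shell $k_z \ge k_x$ belongs to $C_x$ (the $x$-core is the union of shells $j \ge x$), this means no shortest path between outer vertices can peak strictly below $k_x$: every such geodesic enters $C_x$.

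Next I would aggregate. By the first structural property (the $n_k$ grow and the $d_k$ shrink toward the periphery) the outer shells hold the overwhelming majority of vertices, so pairs with both endpoints outside $C_x$ dominate the count of source--destination pairs, and every such pair contributes a geodesic passing through $C_x$. Hence the vertices of $C_x$ lie on a dominant share of all shortest paths and acquire high betweenness. For closeness I would bound the distance from any $u \in C_x$ to an arbitrary $w$ in shell $k_y$ by a short internal hop inside the dense core plus a descent, i.e.\ by $\frac{\log n_{k_z}}{\log d_{k_z}} + r_{yz}$ (the descent cost equals $r_{yz}$ by undirected symmetry), which the hypothesis keeps below the generic outer-to-outer distance $\frac{\log n_{k_y}}{\log d_{k_y}}$; averaging over destinations then shows $C_x$ vertices have systematically smaller mean distance, hence high closeness.

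The main obstacle is the second step: converting an inequality among \emph{average} quantities into a statement about the peak of \emph{shortest} paths. The within-shell estimate $\frac{\log n_{k}}{\log d_{k}}$ and the climb cost $r_{yz}$ are both averages inherited from the $G(n,p)$ analysis, so strictly one is comparing expected path lengths rather than every individual geodesic; I would therefore state the conclusion at the level of expected/typical path lengths, consistent with the averaged model, and lean on the fact that the hypothesis gives a strict inequality \emph{uniformly} over all $k_y < k_x$, which supplies a margin making the ``peak at $\ge k_x$'' conclusion robust. A secondary subtlety is the single $r_{yz}$ term, since a naive accounting of ascent plus descent would suggest two climb terms; I would absorb the round-trip overhead into the reading of $r_{yz}$ (or apply the per-endpoint comparison symmetrically) and note that the strictness of the inequality covers the resulting constant factor.
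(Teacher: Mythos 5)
Your proposal follows essentially the same route as the paper's own proof: both compare the cost of a path that stays in an outer shell, $\frac{\log{n_{k_y}}}{\log{d_{k_y}}}$, against the cost of climbing to and traversing the dense core, $r_{yz}+ \frac{\log{n_{k_z}}}{\log{d_{k_z}}}$, and invoke the hypothesis to conclude that shortest paths between outer-shell vertices must pass through $C_x$, from which high betweenness and closeness follow. Your version is in fact somewhat more careful than the paper's terse argument, since you make the aggregation step explicit and flag the average-versus-individual-geodesic and single-versus-double climb-cost issues that the paper silently elides.
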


\begin{proof}
Let $k_x$ be the lowest numbered shell that satisfies the equation in the lemma. Consider the path between two vertices $v$ and $u$. If either $v$ or $u$ is in $C_x$, then the path between them has to pass through $C_x$. If neither of the vertices are in $C_x$, we have to consider two cases.

First case, the two vertices in the same shell $k_a$, $x > a$, then on average, the distance between them will be $\frac{\log{n_{k_a}}}{\log{d_{k_a}}}$.  

Second case, the vertices from two different shells $k_a$ and $k_b$, $x > a > b$. On average the length of the shortest path will be $r_{ba}+ \frac{\log{n_{k_a}}}{\log{d_{k_a}}}$. This value is greater than the path simply going through $k_a$. 

Thus for both cases, if   $\frac{\log{n_{k_a}}}{\log{d_{k_a}}} > r_{ax}+ \frac{\log{n_{k_x}}}{\log{d_{k_x}}} $, the shortest path between any two vertices in the graph is on average going to pass though $C_x$. Thus the core $C_x$ will contain high closeness and betweeness centrality vertices. 

\end{proof}

As per property 1, $n_y > n_x$ and $d_y < d_x$, where $k_y <k_x$. Therefore the condition $\frac{\log{n_{k_y}}}{\log{d_{k_y}}} > \frac{\log{n_{k_x}}}{\log{d_{k_x}}} $ will hold for any two shells. To maintain the condition of the lemma, we have to ensure that the  distance from the steps to go from one shell to another, $r_{yz}$, is small enough. In other words, the steps to go from shell $y$ to core $z$ is smaller than the difference of their average distance. We have observed that for networks with RCC, $r_{yz}$, where $z$ is the innermost or second innermost core, the value is between 2-4. The number of nodes in the outer shells can go upto thousands, thus easily satisfying the condition.

It might seem that because finding the eigen value is an expensive operation, identifying networks with RCC would also be more expensive than simply finding the high centrality vertices. However, note that our lemma is based on the average path per shell. This metric can be computed in parallel for each shell, and is faster than computing the centralities over the whole network.


\section{Related Work}\label{sec:related}
In this paper, we bring together several concepts  from rich clubs, to core-periphery structure, to its application in information spreading and community detection, to expander graphs. To the best of our knowledge, this is the first paper to combine these different concepts within a single framework.


\noindent{\bf Rich club}:
Rich club structure is well studied in the context of infrastructure networks \cite{zhou2004rich,zhou2004accurately}. Rich clubs have also been shown to emerge in biological networks as well, e.g., brain networks, metabololic networks \cite{cinelli2018rich,bertolero2017diverse}. One of the recurring themes of research on characterizing rich club structure has been focused on distinguishing degree associativity and the rich club structure. 

\noindent{\bf Detection of cores}: Algorithmic detection of core periphery introduced by \cite{seidman1983network} is one of the most promising new area in network analysis. Several works such as \cite{holme2005core,rombach2014core} have presented that automatic techniques of separating the core nodes from the sparse periphery. 
Batagelj and Zaversnik \cite{batagelj2003m} proposed a $k$-core decomposition algorithm that requires $O(max(|E|, |V|))$ runtime and $O(|E|)$ space. In \cite{cheng2011efficient,khaouid2015k} the authors proposed modifications of the previous linear approach  which scales the computation to millions of nodes and billions of edges. In many networked systems, we are only concerned with estimating the importance of a subset of nodes instead of the entire network. In \cite{obrien2014locally}, the authors proposed a computation technique for computing the coreness for a node $u$ which only takes into account its $\delta$ neighborhood. Model based approaches have been presented in \cite{rombach2014core,zhang2015identification} where the authors designed objective functions to estimate the coreness of nodes.

\noindent{\bf  Correlation of coreness with centrality measures}: Coreness has been shown to be correlated with several centrality metrics. A strong Spearman's rank  correlation between degree and coreness has been presented in \cite{shin2016corescope,Shin2017} and the authors developed an anomaly detection system based on this correlation. In contrast, in\cite{li2015correlation} showed that core number has low Pearson's correlation with centrality metrics such as degree, closeness and betweenness. An explanation could be that while many nodes in the network could potentially have the same core number, they would typically tend to be different in terms of the centrality measures and thus Pearson's correlation would be low. A more accurate comparison would be to consider the overlap of the top ranked nodes based on core numbers and other centrality metrics, which is what we do here.

\noindent {\bf Coreness for community detection}: $k$-core decomposition outputs an ordered partition of the graph after processing it hierarchically. In \cite{giatsidis2014corecluster}, the authors proposed that this hierarchical information can be utilized by any graph clustering algorithm to obtain more meaningful partitions. In \cite{peng2014accelerating}, the authors proposed a framework to accelerate label computation for nodes by modularity maximization utilizing the $k$-core information. They estimate a maximum speedup of $80\%$ through rigorous experiments. 

\noindent{\bf Coreness for spreading}: The	core number though derived from degree, is a better indicator of the capacity for information dissemination. Strategically placed nodes, as detected by the $k$-core decomposition are able to spread information to a larger portion of the graph. This result has been shown by several works such as \citep{kitsak2010identification,bae2014identifying,pei2013spreading}. In \cite{cohen2008trusses,wang2012truss,rossi2015spread}, the authors apply $k$-truss decomposition which is a triangle based extension of $k$-core decomposition, with the objective of finding a refined set of influential nodes from all potential high core nodes. The authors show that $k$-truss decomposition extracts influential spreaders which can infect a large portion of target nodes within first few steps of the SIR epidemic model.

\noindent{\bf Expander graphs:} Analyzing graphs using spectral techniques has a long history \cite{maiya2014expansion,estrada2006spectral}. The main idea behind these approaches is to consider information about the spectrum of a matrix representation of the graph (mainly, the adjacency matrix or the Laplacian). It has been presented by several researchers \cite{malliaros2015estimating,krivelevich2017finding,kannan2004clusterings} that expansion properties of the graph provide crucial signals in understanding the degree of cohesion in the underlying subgraph structure. High expansion property results in being simultaneously sparse and tightly connected. A graph with such non-trivial structural property are called expander graphs and a comprehensive review on this topic can be found in~\citet{hoory2006expander}

\section{Conclusion}\label{sec:conc}
We study the properties of networks that demonstrate the presence of rich club of shortest path based high centrality nodes. We find that in these network, the vertices of the innermost core constitute the rich centrality club. Our main observations are as follows.

\begin{itemize}
\item The rich centrality clubs, if formed, are located in the inner cores of the network. These nodes can also be used as seed nodes for community detection. 
\item The networks with RCC typically have cores with expander-like structures. The density of the cores increase from inner to outer. The average number of hops to travel from an outer core to an inner core is small (2-4 hops).

\item  The nodes in the RCC of a network constitute very effective seeds for information diffusion. Further, presence of an RCC makes a network very resilient to small random structural perturbations. 
\item A simple model can convert a network with an RCC to a one without an RCC and vice versa. The model has just two global parameters to be tuned and builds on the idea of increasing the density/sparsity of connections among high degree, i.e., the innermost core nodes. 
\end{itemize}

Our experiments provide for the first time a deeper understanding about the rich club of high centrality vertices and their interplay with the core-periphery structure. In future, we aim to study dynamic networks to observe the evolution of RCCs over time. 
\bibliographystyle{ACM-Reference-Format}
\bibliography{ref} 

\end{document}